\pdfoutput=1
\documentclass[preprint,11pt]{elsarticle}
\usepackage{amsmath,amssymb,amsfonts,mathtools,amsthm}
\usepackage{graphicx}
\usepackage{booktabs}
\usepackage{enumitem}
\usepackage[hidelinks]{hyperref}
\journal{Journal of Fluid Mechanics}

\newcommand{\Om}{\Omega}
\newcommand{\eps}{\varepsilon}
\newcommand{\ii}{\mathrm{i}}
\newcommand{\dd}{\mathrm{d}}
\newcommand{\Rey}{\mathit{Re}}
\newcommand{\sig}{\hat\sigma}
\DeclareMathOperator{\Ai}{Ai}
\DeclareMathOperator{\Bi}{Bi}

\theoremstyle{plain}
\newtheorem{theorem}{Theorem}[section]
\newtheorem{proposition}[theorem]{Proposition}

\theoremstyle{remark}
\newtheorem{remark}[theorem]{Remark}

\begin{document}
\begin{frontmatter}

\title{A critical-layer instability of structural--acoustic wall modes in a
two-flexible-plate channel with Poiseuille flow: closed-form low-Mach corrections
and parity-protected absorption}

\author{Ashray Saxena}
\ead{f20231043@pilani.bits-pilani.ac.in}
\affiliation{organization={Department of Mechanical Engineering, Birla Institute of Technology and Science Pilani},
  addressline={Pilani Campus}, city={Pilani}, postcode={333031}, state={Rajasthan}, country={India}}

\begin{abstract}
\noindent
A compressible fluid column between two thin elastic plates, carrying plane Poiseuille flow, supports
discrete structural--acoustic wall modes. This paper analyses their coupled dispersion and reports a
flow-induced instability. Because a no-slip profile has zero velocity at the wall, the convective
(Ingard--Myers) term in the compliant-wall boundary condition vanishes identically, so the
plate-admittance conditions stay Mach-independent and a Fredholm solvability argument yields a
closed-form $O(M)$ correction to every coupled wavenumber, valid for any smooth no-slip shear profile.
Above the Mach number at which an interior critical layer first appears, $M_{\rm crit}=\Om/\xi$, the two
structural branches behave oppositely. The antisymmetric mode is not absorbed by the critical layer, a
null protected by an exact mid-plane pressure node rather than by weak coupling. The symmetric mode,
which carries no such node, is convectively unstable. Its growth rate, fixed both by a spatial
calculation anchored to the dissipative wall-Stokes term and by a convention-independent temporal one,
strengthens towards the inviscid limit, which identifies an inviscid critical-layer instability rather
than a viscous artefact. A Briggs--Bers analysis classifies it as convective. Its neutral curve follows
$M_{\rm crit}(\Om)$, and its growth rate is proportional to the wall compliance, vanishing for rigid
walls, so it is distinct from the Tollmien--Schlichting and Kelvin--Helmholtz mechanisms. Three
independent numerical schemes agree to one part in $10^{9}$, and the closed-form correction matches
numerical differentiation across three shear profiles.
\end{abstract}

\begin{keyword}
structural acoustics \sep fluid--structure coupling \sep coupled dispersion relation \sep
Pridmore--Brown equation \sep sheared mean flow \sep critical layer \sep compliant duct \sep wall modes
\end{keyword}

\end{frontmatter}

\section*{Highlights}
\begin{itemize}[leftmargin=1.4em,itemsep=1pt,topsep=2pt]
\item Closed-form $O(M)$ correction to every coupled wavenumber of a sheared duct flow.
\item No-slip removes the Ingard--Myers term, so wall conditions stay Mach-independent.
\item Critical layer is subsonic only for structural waves below plate coincidence.
\item Antisymmetric wall mode escapes critical-layer absorption by an exact parity null.
\item A critical-layer instability destabilises the symmetric wall mode above $M_{\rm crit}$.
\end{itemize}

\section*{Nomenclature}
\addcontentsline{toc}{section}{Nomenclature}
\small\noindent\begin{tabular}{@{}l@{\hspace{1.2em}}p{0.60\textwidth}@{}}
\toprule
$a$ & fluid-column height (plate separation)\\
$c,\ \rho$ & sound speed and density of the fluid\\
$D_j=E_jI_j$ & flexural rigidity of plate $j$, $I_j=h_j^3/[12(1-\nu_j^2)]$\\
$E_j,\ \nu_j,\ h_j,\ \rho_{p,j}$ & Young's modulus, Poisson ratio, thickness, density of plate $j$\\
$f(\eta)=4\eta(1-\eta)$ & normalised Poiseuille velocity profile; $f'=4(1-2\eta)$, $f''=-8$\\
$G_j=D_jk_x^4-m_j\omega^2$ & in-vacuo plate (bending) operator\\
$I_0,\ I_f,\ I_{f^2}$ & quadratures $\int_0^1p_0^2$, $\int_0^1fp_0^2$, $\int_0^1f^2p_0^2$\\
$k=\omega/c$ & acoustic wavenumber\\
$k_x,\ k_y=\sqrt{k^2-k_x^2}$ & axial and transverse wavenumbers\\
$k_c,\ \omega_c$ & coincidence wavenumber and frequency of plate~1\\
$l=k_ca$ & non-dimensional gap (Helmholtz number at coincidence)\\
$m_j=\rho_{p,j}h_j$ & mass per unit area of plate $j$\\
$M=U_0/c$ & centreline Mach number\\
$M_{\rm crit}=\Om/\xi$ & critical-layer onset Mach number\\
$p,\ \hat p=p/\rho c^2$ & acoustic pressure perturbation (dimensional, non-dimensional)\\
$\Rey=ca/\nu$ & acoustic Reynolds number ($\nu$ = kinematic viscosity)\\
$r_D=D_2/D_1,\ r_m=m_2/m_1$ & plate stiffness and mass ratios\\
$S_1=\xi^4/\Om^2-1,\ S_2=r_D\xi^4/\Om^2-r_m$ & scaled plate operators; $S_j=0$ at coincidence of plate $j$\\
$U(y)=U_0f(\eta)$ & mean axial velocity, centreline maximum $U_0$\\
$u,v\ (\hat u,\hat v)$ & axial and transverse velocity perturbations\\
$w_j$ & transverse displacement of plate $j$ (positive in $+y$)\\
$x,\ y\ (\eta=y/a)$ & axial and transverse coordinates\\
$\alpha_j=k_yG_j/(\omega^2\rho)$ & dimensionless plate fluid-loading admittance\\
$\beta_j=\eps/(l^2\Om^2S_j)$ & viscous plate-admittance (Robin) coefficient\\
$\delta\sim \Rey^{-1/3}$ & viscous critical-layer thickness\\
$\delta_S\sim \Rey^{-1/2}$ & wall Stokes-layer thickness\\
$\eps=\rho a/m_1$ & fluid-loading parameter\\
$\eta_c$ & critical-layer location, $\sig(\eta_c)=0$\\
$\xi=k_x/k_c$ & non-dimensional axial wavenumber (the eigenvalue)\\
$\xi_0,\xi_1,\xi_2$ & coefficients of the Mach-number expansion of $\xi$\\
$\sigma=\omega-k_xU,\ \sig=\Om-M\xi f$ & Doppler-shifted frequency (dimensional, non-dimensional)\\
$\tau=k_ya=l\sqrt{\Om^2-\xi^2}$ & non-dimensional transverse wavenumber\\
$\Om=\omega/\omega_c$ & non-dimensional frequency\\
\bottomrule
\end{tabular}

\vspace{0.4em}
\noindent\emph{Subscripts and superscripts.} $1,2$: plate~1, plate~2; $s,a$: symmetric, antisymmetric
family; $c$: coincidence or critical layer; $0$: leading-order ($M=0$ or $\eps=0$) value; a prime
denotes $\dd/\dd\eta$ on perturbation quantities and $\dd/\dd\eta$ on the profile $f$.

\section{Introduction}\label{sec:intro}

\subsection{Fluid loading and the structural--acoustic waveguide}
A recurring question in structural acoustics is what becomes of a wave when a structure and a fluid are
allowed to interact: by how much the flexural wavenumber of a structure vibrating in vacuo is shifted
once the structure is placed in contact with a fluid, and, conversely, by how much an acoustic
wavenumber is shifted by the presence of a compliant boundary. For a plate radiating into an unbounded
fluid the answer is classical and is organised around the coincidence frequency, the branch cut of the
half-space Green's function, and the light- and heavy-fluid-loading limits; the standard accounts are
those of Morse and Ingard~\cite{morseingard}, Junger and Feit~\cite{jungerfeit}, Fahy and
Gardonio~\cite{fahy}, and, in its most physically articulated form, Crighton's Rayleigh Medal
lecture~\cite{crighton} and his subsequent asymptotic surveys~\cite{crighton1992}.

When the fluid is instead \emph{enclosed} by compliant walls the character of the problem changes. The
half-space branch cut disappears, and with it the continuum of radiation angles; what replaces it is a
discrete ladder of transverse modes, each with its own cut-on frequency below which it is evanescent
and above which it propagates~\cite{morseingard,fahy}. Coupling now occurs between a countable set of
structural and acoustic waves, and the object of interest becomes the full set of coupled axial
wavenumbers as a function of frequency and of the strength of the fluid loading. The simplest bounded
configuration of this kind is a fluid layer held between two identical flexible plates. Fahy analysed
precisely this arrangement, but restricted attention to the symmetric modes and obtained the dispersion
curves numerically~\cite{fahy}; because the curves were computed rather than constructed, they could
not easily be traced back to the uncoupled structural and acoustic waves from which they descend.

Closely related bounded problems have been studied extensively. Cabelli examined a square duct with one
flexible wall~\cite{cabelli}; Huang and co-workers developed the ``drum-like'' silencer, in which
tensioned membranes or flexible panels form a segment of the duct wall and reflect grazing incident
sound without any lining material and without aerodynamic loss~\cite{huang2000,huang2002}; Choy and
Huang studied the effect of flow on such a device~\cite{choikim}; and the cylindrical analogue, the
fluid-filled elastic shell, has a long literature of its own beginning with Fuller and
Fahy~\cite{fullerfahy} and Pavi\'c~\cite{pavic}. The engineering motivation is broad: breakout noise
from flexible rectangular ducts~\cite{raviprolu}, compliant and lined ducts in
turbomachinery~\cite{rienstrahirschberg}, computational modelling of flow-induced sound in complex
planar and axisymmetric geometries~\cite{morab}, and, outside engineering altogether, wave propagation
in elastic vessels carrying physiological flow~\cite{grotberg,pedley}.

\subsection{Asymptotic tracking of coupled wavenumbers}
The methodological step that made the coupled branches \emph{traceable} rather than merely computable
was taken by Sarkar and Sonti~\cite{sarkarsonti} for the waveguide consisting of one flexible plate
facing a rigid cover. Their idea was to write the coupled dispersion relation as the in-vacuo
structural relation plus a correction produced by the fluid, and then to expand the coupled wavenumbers
as a power series in a fluid-loading parameter $\eps=\rho a/m$, the ratio of the mass of the fluid
column to the mass per unit area of the plate. Setting $\eps=0$ recovers the uncoupled wavenumbers
exactly; the first-order term then follows each coupled branch continuously away from its uncoupled
parent. The physics becomes transparent in the process: fluid loading acts in turn like an added mass
and like an added stiffness, and wherever two uncoupled curves would have intersected, the coupling
instead opens a gap of width $O(\sqrt\eps)$ across which the two branches exchange identity. The
technique inherits the spirit of the older asymptotic treatments of fluid loading due to Morse and
Ingard~\cite{morseingard} and Crighton~\cite{crighton,crighton1992}, but is adapted to a bounded fluid
in which cut-ons, rather than branch cuts, govern the behaviour.

The same group developed this framework considerably further, and it is important for the present paper
to state plainly how far it already reaches. Sarkar, Kunte and Sonti~\cite{kunte2011} showed that
flexible structural--acoustic waveguides of quite different cross-sections (rectangular,
circular-cylindrical and elliptical) all obey a single generic form of the coupled dispersion
equation, so that their coupled-wavenumber solutions can be drawn on one common schematic. Sarkar and
Sonti~\cite{sarkarsonti2011} extended the wavenumber expansions to in-vacuo orthotropic cylindrical
shells, and Prakash and Sonti~\cite{prakash2016} carried the catalogue over to \emph{arbitrary} fluid
loading rather than only the small- or large-$\eps$ limits; nonlinear sound--structure interaction in a
fluid-filled flexible waveguide was subsequently treated by the same authors~\cite{prakash2019}. Every
one of these studies concerns a \emph{quiescent} fluid.

The consequence for the present work is that the quiescent two-flexible-plate catalogue assembled in
Sections~\ref{sec:quiescent} and~\ref{sec:catalogue} below should be regarded as a special case that
reproduces and tidies up known results. It is set out here in full for three reasons: it fixes the
notation and the branch labelling used throughout; it supplies the $M=0$ eigenpairs $(\xi_0,p_0)$ that
the mean-flow analysis perturbs about, and which appear explicitly in the closed-form corrections; and
it makes the paper self-contained for a reader who does not already know the asymptotic-tracking
literature. Its two genuinely incremental features relative to~\cite{fahy,sarkarsonti} are the
antisymmetric family, which Fahy excluded and which the one-plate geometry cannot see, and the
non-identical-plate case, in which the two symmetry families re-couple and the structural waves veer.

\subsection{Duct acoustics with a sheared mean flow}
In many of the applications listed above the fluid is not at rest. A uniform mean flow is a trivial
complication, since it merely produces a Galilean shift $\omega\to\omega-k_xU$ of the wave operator; a
\emph{sheared} mean flow is not, because it couples the perturbation velocity components to the
pressure through the mean shear and destroys the constant-coefficient structure of the problem.
Eliminating the velocities in favour of the pressure then gives a second-order ordinary differential
equation with variable coefficients, first written down by Pridmore-Brown~\cite{pridmorebrown} and
since central to duct aeroacoustics. Early analyses of its modal content were given by Mungur and
Gladwell~\cite{mungurgladwell}, Shankar~\cite{shankar}, Ko~\cite{ko}, Swinbanks~\cite{swinbanks} and
M\"ohring~\cite{mohring}, with the state of the field to that point reviewed by Nayfeh, Kaiser and
Telionis~\cite{nayfeh}; modern accounts are given by Rienstra and Hirschberg~\cite{rienstrahirschberg}
and, for the numerical and asymptotic solution of the equation itself, by Oppeneer, Rienstra and
Sijtsma~\cite{oppeneer2016,oppeneer2020}.

A persistent difficulty in this literature has been the boundary condition at a compliant or lined
wall in the presence of flow. The Ingard--Myers condition~\cite{ingard,myers}, obtained by collapsing
the mean boundary layer to a vortex sheet, is the standard choice, but it is now known to be ill-posed
in the time domain and to support a spurious Kelvin--Helmholtz instability~\cite{brambley2009}. This
motivated a considerable body of corrective work: well-posed modified conditions that retain the
leading effect of a finite boundary-layer thickness~\cite{brambley,khamisbrambley}, experimental
demonstrations that the unmodified condition fails quantitatively~\cite{renouauregan}, and analyses of
the surface waves that live on such a wall~\cite{rienstra2003}. It is therefore worth emphasising at the
outset that the configuration studied here sidesteps the entire difficulty: because a fully developed
Poiseuille profile satisfies no-slip at both walls, the mean velocity \emph{vanishes} exactly where the
boundary condition is applied, the convective correction is identically zero, and the plate-admittance
condition retains its quiescent form at every Mach number. This is not an approximation but an exact
property of the geometry, and it is what makes the closed-form low-Mach analysis of
Section~\ref{sec:lowmach} possible.

\subsection{Critical layers}
The Pridmore--Brown equation carries the Doppler-shifted frequency $\sigma(y)=\omega-k_xU(y)$ in the
denominator of its first-derivative coefficient, and is therefore singular at any height where
$\sigma$ vanishes; that is, wherever the axial phase speed of the wave equals the local speed of the
mean flow. Such a location is a critical layer. The concept originates in hydrodynamic stability
theory, in the work of Rayleigh, Tollmien and Lin, and in Case's demonstration that the associated
singular solutions constitute a continuous spectrum~\cite{case}; the classical accounts are those of
Maslowe~\cite{maslowe} and Drazin and Reid~\cite{drazinreid}. What distinguishes the acoustic problem
from the hydrodynamic one is the exponent structure: for the Pridmore--Brown equation the pressure
remains bounded at the singularity, and it is the cross-stream velocity that diverges, so the
consequences for a discrete pressure mode are far less dramatic than the word ``singularity'' suggests.

Within duct acoustics the critical layer has been studied most thoroughly by Brambley, Darau and
Rienstra~\cite{brambleydarau}, who constructed the Green's function for a linear-shear boundary layer
over an acoustic lining and showed that the critical-layer contribution (the non-modal contribution
obtained by integrating around the continuous-spectrum branch cut) is sometimes negligible and
sometimes not. King \emph{et al.}~\cite{king2022} extended the analysis to a quadratic (parabolic)
boundary-layer profile, which is the profile relevant here, and found the critical-layer contribution
to dominate the downstream pressure perturbation for sufficiently thick boundary layers. Related work
has shown that the continuous-spectrum branch cut can \emph{stabilise} otherwise convectively unstable
modes by concealing them behind the cut~\cite{king2022}, and that viscosity substantially modifies
the acoustics and stability of a shear layer over an impedance wall~\cite{khamisbrambleyvisc}. The
critical-layer phenomenology is thus a mature subject in its own right, and no claim to the contrary is
made below.

\subsection{Flow over compliant walls}
Finally, a compliant wall carrying a flow may be unstable, and any dispersion study must say whether
the modes it computes are stable. The classification of flow-induced disturbances on a flexible surface
into class A (negative activation energy, destabilised by wall damping), class B (positive activation
energy, stabilised by wall damping) and Kelvin--Helmholtz types is due to Benjamin~\cite{benjamin} and
Landahl~\cite{landahl}; Carpenter and Garrad~\cite{carpentergarrad} showed for Kramer-type surfaces
that the flow-induced surface instability is distinct from, and responds differently to damping than,
the Tollmien--Schlichting instability. Panel flutter and flag flutter in channel flow have their own
extensive literature~\cite{eloy,alben,paidoussis}, as does the collapsible-channel problem in
physiology~\cite{grotberg,pedley,stewart}. The stability question is addressed directly in
Section~\ref{sec:largeM}: within the parameter range studied here, the inviscid eigenvalues remain
real, the viscous decay rate is positive, and no flow-induced instability of the structural--acoustic
wall modes is found.

\subsection{Contributions of the present paper}\label{sec:contrib}
Against this background the contributions of the paper are the following, listed in the order in which
they appear and with their status relative to the literature stated explicitly.

\begin{enumerate}[label=(\roman*),leftmargin=2.2em]
\item \textbf{Governing equations for the two-flexible-plate waveguide with a sheared mean flow}
(Section~\ref{sec:formulation}). The linearised Euler system, the plate equations, and the interface
conditions are written out in full, and the Pridmore--Brown equation for this geometry is derived from
them. It is shown that the Poiseuille no-slip condition annihilates the convective (Ingard--Myers) term
in the compliant-wall boundary condition exactly, so that the plate-admittance conditions retain their
quiescent Robin form at all $M$. \emph{Status: the Pridmore--Brown equation is classical; its
combination with a plate-admittance boundary condition, and the exact vanishing of the convective
correction, are specific to this configuration.}

\item \textbf{Quiescent baseline} (Sections~\ref{sec:quiescent}--\ref{sec:catalogue}). The coupled
dispersion relation is obtained in closed form for general non-identical plates, its exact
symmetric/antisymmetric factorisation is established for identical plates, and the single
fluid-loaded-plate relation is recovered in the rigid-wall limit. The complete small- and large-$\eps$
asymptotic catalogue is assembled for both symmetry families, including the local $\sqrt\eps$
expansions at coincidence and at each cut-on. \emph{Status: recovers and organises, for this geometry,
results consistent with the asymptotic-tracking framework
of~\cite{sarkarsonti,kunte2011,prakash2016}; included as setup and as the $M=0$ baseline. The
antisymmetric family and the non-identical-plate veering are incremental additions.}

\item \textbf{A closed-form $O(M)$ correction to every coupled wavenumber}
(Section~\ref{sec:lowmach}). The leading operator of the Mach expansion is shown to be self-adjoint
under the plate-admittance conditions, so that the Fredholm alternative applies with $p_0$ itself as
the adjoint null function. The resulting solvability condition is solved explicitly for $\xi_1$ in
terms of $p_0(0)$, $p_0(1)$ and three elementary quadratures of the quiescent mode shape, for an
arbitrary smooth no-slip shear profile; Poiseuille flow is distinguished only in that one quadrature
collapses onto the norm. It is proved that $\xi_1<0$ on every branch, and the expression is decomposed
into boundary-shear, bulk-shear and Doppler-convection contributions. \emph{Status: new.}

\item \textbf{The $O(M^2)$ correction and its convection/shear decomposition}
(Section~\ref{sec:om2}). The second-order coefficient is even in $M$ and is extracted by Richardson
extrapolation; a shear-suppressed surrogate isolates the shear contribution, which reaches $\sim38\%$ of
the total on the plane-wave branch, and comparison against genuine plug flows quantifies the larger
error incurred by the usual duct-acoustic simplification. \emph{Status: new.}

\item \textbf{Critical-layer analysis for this waveguide} (Section~\ref{sec:cl}). The onset condition
$M_{\rm crit}=\Om/\xi$ is established, and the branch-by-branch consequence, that a critical layer is
reachable at subsonic flow only for the structural wave below coincidence, is drawn. The Frobenius
structure, the velocity singularity, the causal branch selection giving a $-\ii\pi$ phase jump, and the
viscous Airy inner layer of thickness $\Rey^{-1/3}$ are worked through in detail.
\emph{Status: the machinery is classical~\cite{maslowe,brambleydarau,king2022}; its application to a
plate-bounded waveguide, and the branch-resolved onset map, are new.}

\item \textbf{Wall-mode absorption, parity, and a critical-layer instability} (Sections~\ref{sec:viscous}--\ref{sec:instab}). The viscous
linearised Navier--Stokes problem is solved with plate-admittance and no-slip conditions. The eigenvalue
is regular through and above $M_{\rm crit}$; the decay rate scales as $\Rey^{-1/2}$, identifying it as
wall Stokes-layer damping that vanishes inviscidly; and the $\Rey$-independent critical-layer
absorption is at the resolution floor of the fit. For the antisymmetric structural mode this null is
shown to be a consequence of \emph{parity} rather than of weak coupling: the mode's exact mid-plane
pressure node coincides with the critical-layer onset location. A parity-restricted computation on the
symmetric mode, which carries no such node, returns a resolvable $\Rey$-independent residual about sixty
times larger, and breaking the mirror symmetry with unequal plates lifts the residual off the floor; the
null therefore does not generalise beyond the antisymmetric family. The unprotected symmetric mode is in
fact convectively unstable above $M_{\rm crit}$ (Section~\ref{sec:instab}): a critical-layer instability
of a compliant-wall structural--acoustic mode, absent for rigid walls and for a slip profile, whose
growth survives the inviscid limit and whose neutral curve follows the critical-layer onset. This is the
principal physical result of the paper. \emph{Status: new.}

\item \textbf{Validity limits and convective stability} (Section~\ref{sec:largeM}), together with
three-way numerical validation (Section~\ref{sec:val}) agreeing to $\sim6\times10^{-9}$.
\end{enumerate}

\noindent The paper is organised as follows. Section~\ref{sec:formulation} sets out the geometry, the
mean flow, and the full linearised governing equations. Section~\ref{sec:quiescent} derives the
quiescent coupled dispersion relation and its symmetry factorisation, and Section~\ref{sec:catalogue}
assembles the quiescent asymptotic catalogue. Section~\ref{sec:pb} derives the Pridmore--Brown equation
with plate-admittance boundary conditions. Section~\ref{sec:lowmach} obtains the closed-form $O(M)$
correction and Section~\ref{sec:om2} the $O(M^2)$ correction. Section~\ref{sec:cl} analyses the critical
layer and Section~\ref{sec:viscous} the viscous problem above onset. Section~\ref{sec:val} describes
the numerical methods and validation, Section~\ref{sec:largeM} the finite-$M$ behaviour, validity and
stability, and Section~\ref{sec:disc} discusses the results. Section~\ref{sec:concl} concludes.
Appendices collect the longer algebra.

\section{Formulation}\label{sec:formulation}

\subsection{Geometry, plates and scaling}
Consider a two-dimensional compressible fluid of undisturbed density $\rho$ and sound speed $c$
occupying the strip $0\le y\le a$, unbounded in $x$, with all fields independent of $z$. The strip is
closed by thin elastic plates at $y=0$ (plate~1) and $y=a$ (plate~2). Plate $j$ has Young's modulus
$E_j$, Poisson ratio $\nu_j$, thickness $h_j$ and density $\rho_{p,j}$, hence flexural rigidity and
areal mass
\begin{equation}
D_j=\frac{E_jh_j^3}{12(1-\nu_j^2)},\qquad m_j=\rho_{p,j}h_j ,
\label{eq:plateprops}
\end{equation}
and transverse displacement $w_j(x,t)$, measured positive in $+y$. A time dependence $e^{\ii\omega t}$
and an axial dependence $e^{-\ii k_xx}$ are assumed throughout and suppressed where convenient, so that
every perturbation field varies as $e^{\ii(\omega t-k_xx)}$ and
\begin{equation}
\partial_t\to\ii\omega,\qquad \partial_x\to-\ii k_x .
\end{equation}

Scaling follows the coincidence condition of plate~1, at which the in-vacuo flexural wavenumber equals
the acoustic wavenumber:
\begin{equation}
D_1k_c^4=m_1\omega_c^2,\qquad \omega_c=k_cc
\quad\Longrightarrow\quad
\omega_c=c^2\sqrt{m_1/D_1},\qquad k_c=\omega_c/c .
\label{eq:coincidence}
\end{equation}
All quantities are non-dimensionalised on these scales,
\begin{equation}
\Om=\frac{\omega}{\omega_c},\quad
\xi=\frac{k_x}{k_c},\quad
l=k_ca,\quad
\eta=\frac{y}{a},\quad
\hat p=\frac{p}{\rho c^2},\quad
\eps=\frac{\rho a}{m_1},\quad
\tau=k_ya=l\sqrt{\Om^2-\xi^2},
\label{eq:nondim}
\end{equation}
with the plate stiffness and mass ratios and the scaled plate operators
\begin{equation}
r_D=\frac{D_2}{D_1},\quad r_m=\frac{m_2}{m_1},\qquad
S_1(\xi)=\frac{\xi^4}{\Om^2}-1,\qquad
S_2(\xi)=r_D\frac{\xi^4}{\Om^2}-r_m .
\label{eq:Sdef}
\end{equation}
Note that $S_j=0$ precisely at the coincidence point of plate $j$; the vanishing of $S_j$ is what makes
coincidence a distinguished frequency in every formula below. The parameter $\eps=\rho a/m_1$ compares
the mass of the enclosed fluid column with the mass per unit area of plate~1 and is the small parameter
of the quiescent asymptotics; $M=U_0/c$ is the small parameter of the mean-flow asymptotics. The two
expansions are carried out sequentially, not jointly: the $\eps$-expansion supplies the leading-order
eigenpair $(\xi_0,p_0)$, and the $M$-expansion is then performed about that eigenpair with $\eps$ held
fixed and arbitrary.

\subsection{Mean flow}
The mean state is the fully developed plane Poiseuille solution of the incompressible Navier--Stokes
equations,
\begin{equation}
U(y)=4U_0\frac{y}{a}\Bigl(1-\frac{y}{a}\Bigr)=U_0f(\eta),\qquad
f(\eta)=4\eta(1-\eta),\qquad V=0,\qquad P=\mathrm{const},
\label{eq:meanflow}
\end{equation}
which satisfies no-slip at both walls, $U(0)=U(a)=0$, attains its maximum $U_0$ at the centreline
$\eta=1/2$, and has
\begin{equation}
f'(\eta)=4(1-2\eta),\qquad f'(0)=4,\quad f'(1)=-4,\qquad f''(\eta)=-8 .
\label{eq:fprops}
\end{equation}
The constancy of $f''$ is used repeatedly and is the technical reason the $O(M)$ analysis closes in
elementary form.

\begin{remark}[Consistency of an incompressible mean flow with compressible perturbations]\label{rem:incompress}
The mean flow is taken to be the exact incompressible Poiseuille profile while the perturbations
carry sound and are treated as compressible. This is consistent to the order worked: mean-flow
compressibility corrections (density and temperature variation across the channel, and the
associated departure of the profile from parabolic) enter at $O(M^2)$ and are absorbed into the
$M^2$ term of the wavenumber expansion. The model is therefore quantitative for $M\ll1$; the
finite-$M$ results of Section~\ref{sec:largeM} are an extension of the model rather than a prediction
for a genuinely compressible mean flow, and are labelled as such.
\end{remark}

\subsection{Linearised governing equations}
Let $(u',v',p')$ denote the perturbation axial velocity, transverse velocity and pressure superposed on
the base state $(U(y),0,P)$. Linearising the two-dimensional compressible Euler equations about that
base state gives
\begin{align}
\rho\bigl(\partial_tu'+U\partial_xu'+U'v'\bigr)&=-\partial_xp',
\label{eq:leex}\\[2pt]
\rho\bigl(\partial_tv'+U\partial_xv'\bigr)&=-\partial_yp',
\label{eq:leey}\\[2pt]
\partial_tp'+U\partial_xp'+\rho c^2\bigl(\partial_xu'+\partial_yv'\bigr)&=0 .
\label{eq:leec}
\end{align}
Equation~\eqref{eq:leex} contains the term $\rho U'v'$, by which a transverse displacement of a fluid
particle across the mean shear produces an axial velocity perturbation; it is this term, absent for a
uniform mean flow, that couples shear into the acoustics and ultimately produces both the
Pridmore--Brown shear term and the critical layer. Equation~\eqref{eq:leec} is the linearised
continuity/energy relation for a homentropic perturbation. When viscosity is retained
(Section~\ref{sec:viscous}) the momentum equations acquire the usual Navier--Stokes terms; the
linearised viscous system is written out as~\eqref{eq:lnsu}--\eqref{eq:lnsp}.

Under the wave ansatz $e^{\ii(\omega t-k_xx)}$ the material derivative following the mean flow becomes
\begin{equation}
\frac{\mathrm{D}}{\mathrm{D}t}=\partial_t+U\partial_x\;\longrightarrow\;\ii\bigl(\omega-k_xU(y)\bigr)
\equiv\ii\sigma(y),\qquad
\sigma(y)=\omega-k_xU(y),
\label{eq:sigmadef}
\end{equation}
whose non-dimensional counterpart is
\begin{equation}
\sig(\eta)=\Om-M\xi f(\eta),\qquad \sig(0)=\sig(1)=\Om .
\label{eq:sighat}
\end{equation}
The Doppler-shifted frequency $\sigma$ plays the role that $\omega$ plays in the quiescent problem. It
is spatially varying, it reduces to $\omega$ at both walls because of no-slip, and its vanishing
somewhere in the interior is the definition of a critical layer.

\subsection{Plate equations and interface conditions}
Each plate obeys the thin-plate (Kirchhoff) equation
\begin{equation}
D_j\frac{\partial^4w_j}{\partial x^4}+m_j\frac{\partial^2w_j}{\partial t^2}=q_j ,
\label{eq:plateeq}
\end{equation}
where $q_j$ is the net transverse load per unit area acting in $+y$. With $w_j\propto
e^{\ii(\omega t-k_xx)}$ the operator on the left reduces to multiplication by
\begin{equation}
G_j\equiv D_jk_x^4-m_j\omega^2 ,
\label{eq:Gdef}
\end{equation}
which vanishes at the in-vacuo flexural wavenumber of plate $j$. In non-dimensional variables
$G_1=m_1\omega^2S_1$ and $G_2=m_1\omega^2S_2$, which is the origin of the operators~\eqref{eq:Sdef}.

Two conditions are imposed at each wall.

\paragraph{(a) Kinematic condition.} The fluid must remain in contact with the plate, so the normal
fluid velocity equals the normal velocity of the plate surface following the mean flow,
\begin{equation}
v'=\frac{\mathrm{D}w_j}{\mathrm{D}t}=\partial_tw_j+U\partial_xw_j .
\label{eq:kinematic}
\end{equation}
For a compliant wall beneath a moving fluid the second term is the convective (Ingard--Myers)
contribution~\cite{ingard,myers}, and it is the source of the well-posedness difficulties recalled in
Section~\ref{sec:intro}. Here it vanishes identically: the condition is applied at $y=0$ and $y=a$,
where the Poiseuille profile satisfies no-slip, $U=0$. Hence
\begin{equation}
v'(0)=\ii\omega w_1,\qquad v'(a)=\ii\omega w_2
\label{eq:kin2}
\end{equation}
at every Mach number, exactly as in the quiescent problem.

\begin{remark}[Why the Ingard--Myers correction is absent]\label{rem:IM}
The Ingard--Myers condition is obtained by collapsing a thin mean boundary layer onto the wall, which
introduces a slip velocity at the wall and hence a convective term proportional to $U(\text{wall})$.
A fully developed Poiseuille profile has no such thin layer: the shear occupies the whole channel and
the mean velocity is exactly zero at the wall. The convective term is therefore not small, it is
identically zero, and no modified or regularised wall condition is required. This is a structural
feature of the configuration, and it is what allows the boundary conditions to be $M$-independent and
the $O(M)$ analysis to be carried out in closed form.
\end{remark}

\paragraph{(b) Dynamic condition.} The fluid loads each plate with its pressure, with a sign set by
which side the fluid lies on. The fluid is above plate~1 and below plate~2, so
\begin{equation}
q_1=-p'(x,0),\qquad q_2=+p'(x,a)
\quad\Longrightarrow\quad
G_1w_1=-p'(x,0),\qquad G_2w_2=+p'(x,a).
\label{eq:dynamic}
\end{equation}
Eliminating $w_j$ between~\eqref{eq:kin2} and~\eqref{eq:dynamic} gives the plate-admittance relations
\begin{equation}
\boxed{\;v'(0)=-\frac{\ii\omega}{G_1}\,p'(0),\qquad
v'(a)=+\frac{\ii\omega}{G_2}\,p'(a).\;}
\label{eq:adm}
\end{equation}
These are the only place the structure enters. A plate is, from the point of view of the fluid, a wall
of frequency- and wavenumber-dependent admittance $\ii\omega/G_j$; the admittance is infinite at the
in-vacuo flexural resonance ($G_j=0$, i.e.\ $S_j=0$, i.e.\ coincidence) and tends to zero as the plate
becomes rigid.

\subsection{Reduction map}
It is useful to record the limits in which the model collapses onto known problems:
\begin{center}
\begin{tabular}{@{}ll@{}}
\toprule
Limit & Reduces to\\
\midrule
$D_2\to\infty$ (plate 2 rigid) & single fluid-loaded plate over a rigid wall~\cite{sarkarsonti}\\
$D_2\to\infty$ and $D_1\to\infty$ & rigid-walled duct; cut-ons $\tau=n\pi$\\
$D_j\to0$ & pressure-release walls\\
$r_D=r_m=1$ & identical plates; symmetric/antisymmetric decoupling\\
$M\to0$ & quiescent problem, Section~\ref{sec:quiescent}\\
$\eps\to0$ & uncoupled structural and acoustic waves\\
\bottomrule
\end{tabular}
\end{center}

\section{The quiescent coupled dispersion relation}\label{sec:quiescent}

\subsection{Derivation}
With $M=0$ the mean flow is absent, $\sigma\equiv\omega$, and~\eqref{eq:leex}--\eqref{eq:leec} collapse
to the two-dimensional Helmholtz equation $p'_{xx}+p'_{yy}+k^2p'=0$ with $k=\omega/c$. Writing the
general $x$-travelling solution as a superposition of the two transverse-travelling components,
\begin{equation}
p'(x,y)=\bigl(Ae^{-\ii k_yy}+Be^{+\ii k_yy}\bigr)e^{-\ii k_xx},\qquad k_x^2+k_y^2=k^2,
\label{eq:pAB}
\end{equation}
the linearised $y$-momentum equation $\ii\omega\rho v'=-\partial_yp'$ gives the transverse velocity
\begin{equation}
v'(x,y)=\frac{k_y}{\omega\rho}\bigl(Ae^{-\ii k_yy}-Be^{+\ii k_yy}\bigr)e^{-\ii k_xx}.
\label{eq:vAB}
\end{equation}
Substituting~\eqref{eq:pAB} and~\eqref{eq:vAB} into the two admittance
relations~\eqref{eq:adm} yields a homogeneous $2\times2$ system for $(A,B)$. Introducing the
dimensionless fluid-loading admittance of each plate,
\begin{equation}
\alpha_j\equiv\frac{k_yG_j}{\omega^2\rho}=\frac{k_y\bigl(D_jk_x^4-m_j\omega^2\bigr)}{\omega^2\rho},
\label{eq:alphadef}
\end{equation}
the system reads
\begin{equation}
\begin{pmatrix}
1+\ii\alpha_1 & 1-\ii\alpha_1\\[2pt]
(1-\ii\alpha_2)e^{-\ii k_ya} & (1+\ii\alpha_2)e^{+\ii k_ya}
\end{pmatrix}
\begin{pmatrix}A\\B\end{pmatrix}=\begin{pmatrix}0\\0\end{pmatrix},
\end{equation}
and a non-trivial acoustic field exists only if the determinant vanishes. This gives the coupled
dispersion relation
\begin{equation}
\boxed{\;(1+\ii\alpha_1)(1+\ii\alpha_2)=e^{2\ii k_ya}\,(1-\ii\alpha_1)(1-\ii\alpha_2).\;}
\label{eq:disp}
\end{equation}
Because $(1+\ii\alpha)/(1-\ii\alpha)=e^{2\ii\arctan\alpha}$, equation~\eqref{eq:disp} is equivalent to
the phase-matching statement that a transverse round trip must return the wave to itself,
\begin{equation}
k_ya=\arctan\alpha_1+\arctan\alpha_2+n\pi
\qquad\Longleftrightarrow\qquad
\tan(k_ya)=\frac{\alpha_1+\alpha_2}{1-\alpha_1\alpha_2},
\label{eq:disptan}
\end{equation}
which is the most convenient working form: for real $k_y$ it is a single real transcendental equation
in $k_x$ valid for any combination of plate properties. The equivalence
of~\eqref{eq:disp} and the determinant was verified symbolically; the determinant equals the
left-minus-right side of~\eqref{eq:disp} multiplied by the non-vanishing factor $e^{-\ii k_ya}$, so no
roots are created or destroyed.

\subsection{Two checks}
\paragraph{Rigid-wall limit.} Letting plate~2 become rigid, $D_2\to\infty$ and hence
$\alpha_2\to\infty$, equation~\eqref{eq:disptan} gives $\tan(k_ya)=-1/\alpha_1$, i.e.
$\alpha_1=-\cot(k_ya)$, that is
\begin{equation}
D_1k_x^4-m_1\omega^2=-\frac{\omega^2\rho}{k_y}\cot(k_ya),
\label{eq:ss9}
\end{equation}
which is precisely the classical dispersion relation for a single fluid-loaded plate backed by a rigid
wall~\cite{sarkarsonti}.

\paragraph{Identical plates: exact symmetry factorisation.} If the plates are identical,
$\alpha_1=\alpha_2=\alpha$, the left- and right-hand sides of~\eqref{eq:disp} are perfect squares and
the relation factorises exactly,
\begin{equation}
\Bigl[(1+\ii\alpha)-e^{\ii k_ya}(1-\ii\alpha)\Bigr]\cdot
\Bigl[(1+\ii\alpha)+e^{\ii k_ya}(1-\ii\alpha)\Bigr]=0,
\label{eq:factor}
\end{equation}
giving two decoupled scalar problems:
\begin{equation}
\underbrace{\alpha=-\cot\frac{k_ya}{2}}_{\text{symmetric: }v'=0\text{ at }y=a/2}
\qquad\text{and}\qquad
\underbrace{\alpha=+\tan\frac{k_ya}{2}}_{\text{antisymmetric: }p'=0\text{ at }y=a/2}.
\label{eq:symanti}
\end{equation}
The symmetric family has a velocity node at the mid-plane and therefore behaves as a single flexible
plate facing a rigid wall placed at $y=a/2$; the two plates move in antiphase, $w_2=-w_1$, so the
channel breathes. The antisymmetric family has a pressure node at the mid-plane, the plates move in
phase, $w_2=+w_1$, and the channel sloshes. The symmetric relation is the single-plate
relation~\eqref{eq:ss9} with $a$ replaced by $a/2$; the antisymmetric relation has no single-plate
counterpart, which is why it is invisible in the one-flexible-plate geometry.

\subsection{Non-dimensional form and the uncoupled skeleton}
Applying the scaling~\eqref{eq:nondim} to~\eqref{eq:alphadef} gives
\begin{equation}
\alpha_1=\frac{1}{\eps}S_1(\xi)\,\tau,\qquad
\alpha_2=\frac{1}{\eps}S_2(\xi)\,\tau,\qquad
\tau=l\sqrt{\Om^2-\xi^2},
\label{eq:alphand}
\end{equation}
so that the two identical-plate relations~\eqref{eq:symanti} take the compact form
\begin{equation}
\text{(SYM)}\quad S(\xi)\,\tau\tan\frac{\tau}{2}+\eps=0,
\qquad\qquad
\text{(ANTI)}\quad S(\xi)\,\tau\cot\frac{\tau}{2}-\eps=0,
\qquad S=\frac{\xi^4}{\Om^2}-1 .
\label{eq:nd}
\end{equation}
For comparison, the single-plate relation of~\cite{sarkarsonti} in the same variables is
$S(\xi)\tau\tan\tau+\eps=0$; the symmetric family is obtained from it by the half-gap substitution
$\tan\tau\to\tan(\tau/2)$, exactly as it must be.

Setting $\eps=0$ in~\eqref{eq:nd} switches off the coupling and exposes the uncoupled parents from
which every coupled branch grows:
\begin{itemize}[leftmargin=1.6em,itemsep=1pt]
\item \textbf{Structural wave}, from $S(\xi)=0$: $\;\xi=\Om^{1/2}$. This is the in-vacuo flexural wave
and it belongs to \emph{both} families, corresponding to in-phase and out-of-phase plate flexure.
\item \textbf{Plane wave}, from $\tau=0$ in the symmetric relation: $\;\xi=\Om$. It belongs to the
symmetric family only.
\item \textbf{Even cut-ons}, from $\tan(\tau/2)=0$: $\;\tau=2n\pi$, symmetric family.
\item \textbf{Odd cut-ons}, from $\cot(\tau/2)=0$: $\;\tau=(2n{+}1)\pi$, antisymmetric family. The
antisymmetric family has \emph{no} plane wave, since $\tau\cot(\tau/2)\to2\ne0$ as $\tau\to0$.
\end{itemize}
The single rigid-duct cut-on ladder $\tau=n\pi$ of the one-plate problem therefore splits into two
interleaved ladders, one per symmetry family, and the full two-plate spectrum is twice as dense as the
one-plate spectrum at the same gap.

\subsection{Quiescent mode shapes}
The quiescent pressure eigenfunction satisfies $p_0''+\tau_0^2p_0=0$ on $0\le\eta\le1$ with
$\tau_0=l\sqrt{\Om^2-\xi_0^2}$, whence
\begin{equation}
p_0(\eta)=A\cos(\tau_0\eta)+B\sin(\tau_0\eta),
\label{eq:p0}
\end{equation}
with $B/A$ fixed by either boundary condition and $\xi_0$ by the dispersion relation. For the branches
used below,
\begin{equation}
\text{structural: }\xi_0=\Om^{1/2},\ \tau_0=l\sqrt{\Om^2-\Om};\qquad
\text{plane wave: }\xi_0=\Om,\ \tau_0=0,\ p_0=\text{const};
\end{equation}
\begin{equation}
\text{$m$-th cut-on: }\tau_0=m\pi,\ \xi_0=\sqrt{\Om^2-(m\pi/l)^2},
\end{equation}
with $m$ even for the symmetric family and odd for the antisymmetric family. These eigenpairs are the
input to every mean-flow formula in Sections~\ref{sec:lowmach}--\ref{sec:cl}; the quadratures they
require are collected in \ref{app:integrals}.

\section{The quiescent asymptotic catalogue}\label{sec:catalogue}

This section assembles, for both symmetry families, the closed-form expansions of the coupled
wavenumbers in the fluid-loading parameter. As stated in Section~\ref{sec:contrib}, the results
recover for the two-plate geometry the asymptotic-tracking framework
of~\cite{sarkarsonti,kunte2011,prakash2016} and are included as the $M=0$ baseline; the derivations are
sketched here and given in full in \ref{app:catalogue}. Throughout,
$\tau_s\equiv l\sqrt{\Om^2-\Om}$ denotes the transverse wavenumber evaluated on the uncoupled
structural wave.

\subsection{Small fluid loading, $\eps\ll1$}
Each coupled wavenumber is written as its uncoupled parent plus a correction,
$\xi=\xi^{(0)}+a_1\eps+O(\eps^2)$, and the powers of $\eps$ in~\eqref{eq:nd} are balanced.

\paragraph{Structural wave, both families.} Perturbing about $\xi^{(0)}=\Om^{1/2}$, where
$S(\xi)\approx4\Om^{-1/2}(\xi-\Om^{1/2})$, gives
\begin{equation}
\xi_s^{\rm sym}(\eps)=\Om^{1/2}-\frac{\eps}{4l\sqrt{\Om-1}\,\tan(\tau_s/2)},
\qquad
\xi_s^{\rm anti}(\eps)=\Om^{1/2}+\frac{\eps\,\tan(\tau_s/2)}{4l\sqrt{\Om-1}} .
\label{eq:struct}
\end{equation}
Below coincidence ($\Om<1$) one writes $\sqrt{\Om-1}\to\ii\sqrt{1-\Om}$ and $\tan\to\ii\tanh$, and both
corrections become real and positive: the fluid acts as an added mass and pushes the coupled
wavenumber above its in-vacuo value. The symmetric expression coincides with the single-plate result
of~\cite{sarkarsonti} under the half-gap substitution $l\to l/2$, an analytic consistency check.

\paragraph{Plane wave, symmetric family only.}
\begin{equation}
\xi_a^{\rm sym}(\eps)=\Om+\frac{\eps}{\Om l^2(\Om^2-1)} .
\label{eq:planewave}
\end{equation}
Above coincidence the correction is positive, placing the coupled plane wave slightly above $\xi=\Om$,
that is in the transversely evanescent region.

\paragraph{Cut-on branches, both families.} With
$\xi_0=\sqrt{\Om^2-(m\pi/l)^2}$ and $S_0=\xi_0^4/\Om^2-1$,
\begin{equation}
\xi_a^{(m)}(\eps)=\xi_0+\frac{2\eps}{S_0l^2\xi_0},
\label{eq:cuton}
\end{equation}
with $m$ even for the symmetric family and odd for the antisymmetric family. The factor of two relative
to the single-plate expression is the half-angle of the two-plate problem.

\subsection{Local $\sqrt\eps$ expansions at the crossings}
The regular expansions above fail wherever two uncoupled parents approach one another, because the
denominators in~\eqref{eq:struct}--\eqref{eq:cuton} vanish there. In an $O(\eps)$-wide neighbourhood of
each such point the correct balance is $O(\sqrt\eps)$, obtained by rescaling
$\xi=\xi^{(0)}+\sqrt\eps\,X$ and retaining the leading two terms. Two cases arise.

\paragraph{Symmetric coincidence, $\Om=1$: structure meets plane wave.}
\begin{equation}
\xi(\eps)=\Om\pm\frac{\sqrt\eps}{2l},\qquad\text{gap width }=\frac{\sqrt\eps}{l}.
\label{eq:gapcoinc}
\end{equation}
The upper branch above $\Om=1$ continues the structural branch from below and vice versa; the two
branches exchange identity across the gap.

\paragraph{Cut-on crossings, both families.} Where the structural wave meets the $m$-th cut-on, at the
frequency $\Om_c$ satisfying $l^2(\Om_c^2-\Om_c)=(m\pi)^2$ and hence $\xi_c=\sqrt{\Om_c}$,
\begin{equation}
\xi(\eps)=\sqrt{\Om_c}\pm\frac{\sqrt{2\eps}}{2l},\qquad\text{gap width }=\frac{\sqrt{2\eps}}{l}.
\label{eq:gapcuton}
\end{equation}

\paragraph{Antisymmetric coincidence: no gap.} The antisymmetric structural correction
in~\eqref{eq:struct} remains finite as $\Om\to1$, with $a_1\to\tfrac18$, so
$\xi\approx\Om^{1/2}(1+\eps/8)$ and the antisymmetric structural wave passes through coincidence
smoothly. The reason is structural rather than accidental: a gap requires two parents to collide, and
the antisymmetric family contains no plane wave for the structural wave to collide with at $\Om=1$.

\subsection{Heavy fluid loading, $\eps\gg1$}
Setting $\eps'=1/\eps\ll1$ and repeating the procedure, the parents are no longer the rigid-duct
cut-ons but the \emph{pressure-release} cut-ons, and for both families
\begin{equation}
\xi'^{(m)}_a(\eps)=\sqrt{\Om^2-\Bigl(\frac{m\pi}{l}\Bigr)^2}-\frac{2(m\pi)^2S_0}{l^2\xi_0}\,\frac1\eps,
\label{eq:heavy}
\end{equation}
now with $m$ \emph{odd} for the symmetric family and \emph{even} for the antisymmetric family, the
latter including an $m=0$ pressure-release mode. The parities are therefore exchanged relative to the
light-loading case, and each branch migrates steadily from a rigid-walled to a pressure-release
wavenumber as the loading grows. Table~\ref{tab:migration} records the correspondence.

\begin{table}[htb]
\centering
\small
\begin{tabular}{@{}lll@{}}
\toprule
Family & Light loading, $\eps\ll1$ & Heavy loading, $\eps\gg1$\\
 & (rigid-walled parents) & (pressure-release parents)\\
\midrule
Symmetric & $\tau=2n\pi$ (even) $+$ plane wave $\xi=\Om$ & $\tau=(2m{+}1)\pi$ (odd)\\
Antisymmetric & $\tau=(2n{+}1)\pi$ (odd) & $\tau=2m\pi$ (even), including $m=0$\\
\bottomrule
\end{tabular}
\caption{Parity exchange of the two cut-on ladders between light and heavy fluid loading. Each family
migrates from the rigid-walled to the pressure-release ladder; because the two ladders interleave, the
full two-plate spectrum is twice as dense as the single-plate spectrum.}
\label{tab:migration}
\end{table}

\subsection{Accuracy of the catalogue}
Every expression above was checked against the exact numerical roots of~\eqref{eq:nd}. The maximum
relative errors within the stated ranges of validity are collected in Table~\ref{tab:cat}; the
composite comparison is shown in Fig.~\ref{fig:catalogue}. Away from the transition windows the
first-order expressions are accurate to better than half a percent at $\eps=0.05$, and the local
$\sqrt\eps$ forms reproduce the gap edges to better than $0.06\%$ at $\eps=0.03$.

\begin{table}[htb]
\centering
\resizebox{\textwidth}{!}{%
\begin{tabular}{@{}lllll@{}}
\toprule
Region & Family & Closed form & Order & Max.\ error\\
\midrule
Structural, away from crossings & sym & $\Om^{1/2}-\eps/[4l\sqrt{\Om-1}\tan(\tau_s/2)]$ & $\eps$ & $0.08$--$0.49\%$\\
Structural, away from crossings & anti & $\Om^{1/2}+\eps\tan(\tau_s/2)/[4l\sqrt{\Om-1}]$ & $\eps$ & $0.009$--$0.22\%$\\
Plane wave, away from $\Om=1$ & sym & $\Om+\eps/[\Om l^2(\Om^2-1)]$ & $\eps$ & $0.016$--$0.33\%$\\
Cut-on, away from crossings & both & $\xi_0+2\eps/(S_0l^2\xi_0)$, $m$ even/odd & $\eps$ & $0.08$--$1\%$\\
Coincidence & sym & $\Om\pm\sqrt\eps/(2l)$ & $\sqrt\eps$ & $<0.03\%$\\
Coincidence & anti & regular, $\Om^{1/2}(1+\eps/8)$ & $\eps$ & ---\\
Cut-on crossing & both & $\sqrt{\Om_c}\pm\sqrt{2\eps}/(2l)$ & $\sqrt\eps$ & $<0.06\%$\\
Heavy loading & both & $\xi_0-2(m\pi)^2S_0\eps^{-1}/(l^2\xi_0)$, $m$ odd/even & $\eps^{-1}$ & $0.1$--$3.9\%$\\
\bottomrule
\end{tabular}}
\caption{The quiescent asymptotic catalogue and its validated accuracy. Errors are maxima of the
relative deviation from the exact numerical root over the stated range, at $l=3$ with $\eps=0.05$
(regular expansions), $\eps=0.03$ (local $\sqrt\eps$ expansions) and $\eps=20$ (heavy loading).}
\label{tab:cat}
\end{table}

\begin{figure}[htb]
\centering
\includegraphics[width=0.96\textwidth]{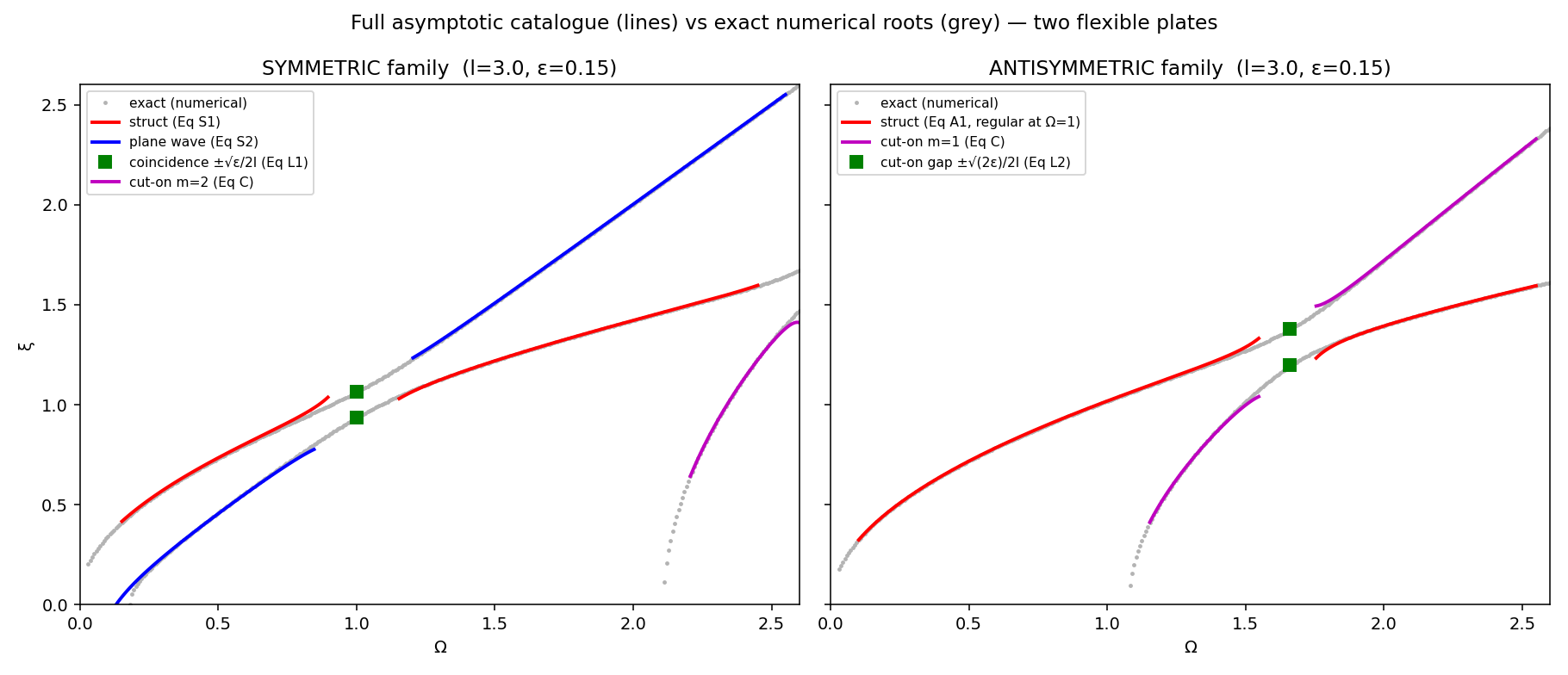}
\caption{The complete quiescent asymptotic catalogue (coloured lines and green gap markers) overlaid on
the exact numerical roots (grey), for $l=3$, $\eps=0.15$. Left: symmetric family, showing the
structural branch~\eqref{eq:struct}, the plane wave~\eqref{eq:planewave}, the $m=2$
cut-on~\eqref{eq:cuton} and the coincidence gap~\eqref{eq:gapcoinc}. Right: antisymmetric family,
showing the structural branch, the $m=1$ cut-on and the cut-on gap~\eqref{eq:gapcuton}; note the
absence of a plane wave and the smooth passage of the structural branch through $\Om=1$. A single
physical dispersion curve is a patchwork of these forms: on the symmetric side, plane wave
$\to$ (coincidence gap) $\to$ structural $\to$ (cut-on gap) $\to$ cut-on.}
\label{fig:catalogue}
\end{figure}

\begin{figure}[htb]
\centering
\includegraphics[width=0.9\textwidth]{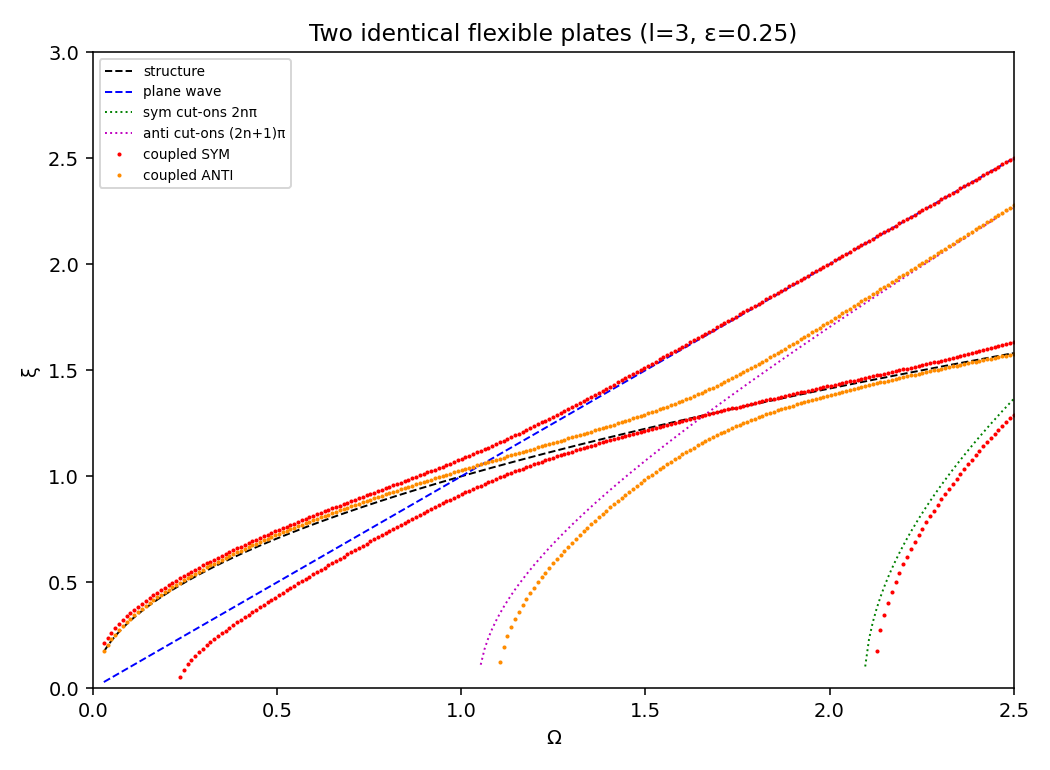}
\caption{Coupled symmetric and antisymmetric dispersion branches for two identical flexible plates
($l=3$, $\eps=0.25$), overlaid on the uncoupled parents of~\eqref{eq:nd}. The symmetric family tracks
the plane wave and the even cut-ons; the antisymmetric family tracks the odd cut-ons and has no
plane-wave member. Both families perturb the common structural wave $\xi=\Om^{1/2}$, and a gap opens at
every intersection.}
\label{fig:branches}
\end{figure}

\subsection{Non-identical plates: veering}
When $r_D\ne1$ or $r_m\ne1$ the factorisation~\eqref{eq:factor} fails and the full
relation~\eqref{eq:disptan} must be used. The two structural waves, in-phase and out-of-phase plate
flexure, then have different in-vacuo wavenumbers, and where they would otherwise cross they instead
veer, exchanging mode shape in the manner familiar from two coupled oscillators. Perturbing about the
identical-plate solution gives a minimum separation
\begin{equation}
\Delta\xi_{\rm veer}=\frac{\eps}{2l\sqrt{\Om-1}\,\sin\tau_s}
\qquad\text{at }r_D=r_m=1,
\label{eq:veer}
\end{equation}
which grows linearly with the fluid loading, as confirmed numerically. Physically, asymmetry in
stiffness or mass breaks the mirror symmetry of the channel, so symmetric and antisymmetric motions
cease to be eigenstates and exchange energy. Adding a structural loss factor,
$D_j\to D_j(1+\ii\eta_j)$, renders every wavenumber complex; the structural branches attenuate most and
the plane wave least, since the former carry most of their energy in the plates.

\begin{figure}[htb]
\centering
\includegraphics[width=0.98\textwidth]{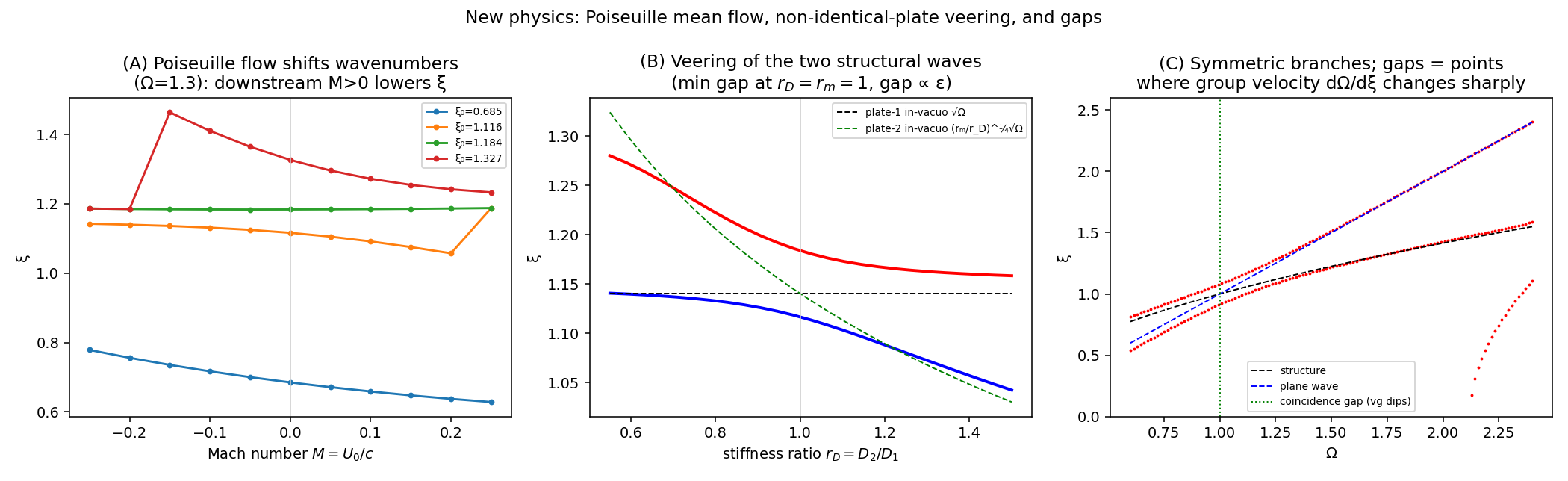}
\caption{(a) Poiseuille mean flow shifts every coupled wavenumber, and the shift is odd in $M$
(anticipating Section~\ref{sec:lowmach}); $\Om=1.3$. (b) Veering of the two structural waves as the
stiffness ratio $r_D=D_2/D_1$ is varied through unity, with the in-vacuo parents shown dashed; the
minimum gap is~\eqref{eq:veer} and is proportional to $\eps$. (c) Symmetric coupled branches, with the
coincidence gap marked; the group velocity $\dd\Om/\dd\xi$ changes sharply across each $\sqrt\eps$ gap,
which is the signature of energy switching carrier between the structure and the fluid.}
\label{fig:newphysics}
\end{figure}

\section{The Pridmore--Brown equation with plate-admittance conditions}\label{sec:pb}

\subsection{Why a new wave equation is needed}
With a uniform mean flow $U=\text{const}$ the wave operator transforms under a Galilean shift
$\omega\to\omega-k_xU$ and nothing structural changes. A sheared flow $U=U(y)$ destroys this. The term
$\rho U'v'$ in~\eqref{eq:leex} couples the transverse velocity into the axial momentum balance, so the
three perturbation fields no longer separate, and eliminating $u'$ and $v'$ in favour of $p'$ produces
an ordinary differential equation with variable coefficients. That equation is the Pridmore--Brown
equation~\cite{pridmorebrown}.

\subsection{Derivation}
Under the wave ansatz, \eqref{eq:leex}--\eqref{eq:leec} become, with $\sigma$ as
in~\eqref{eq:sigmadef},
\begin{align}
\ii\sigma\rho u'&=\ii k_xp'-\rho U'v',
\label{eq:pbu}\\
\ii\sigma\rho v'&=-\partial_yp',
\label{eq:pbv}\\
\ii\sigma p'+\rho c^2\bigl(-\ii k_xu'+\partial_yv'\bigr)&=0 .
\label{eq:pbc}
\end{align}

\emph{Step 1.} Solve~\eqref{eq:pbv} for the transverse velocity:
\begin{equation}
v'=\frac{\ii\,\partial_yp'}{\sigma\rho}.
\label{eq:vfromp}
\end{equation}

\emph{Step 2.} Substitute into~\eqref{eq:pbu} and solve for the axial velocity:
\begin{equation}
\ii\sigma\rho u'=\ii k_xp'-\rho U'\frac{\ii\,\partial_yp'}{\sigma\rho}
\quad\Longrightarrow\quad
u'=\frac{k_xp'}{\sigma\rho}+\frac{U'\,\partial_yp'}{\ii\sigma^2\rho}.
\label{eq:ufromp}
\end{equation}
The second term is the shear contribution; it is absent for uniform flow.

\emph{Step 3.} Substitute~\eqref{eq:vfromp} and~\eqref{eq:ufromp} into the continuity
relation~\eqref{eq:pbc}:
\begin{equation}
-\ii\sigma p'+\rho c^2\left[-\ii k_x\left(\frac{k_xp'}{\sigma\rho}
+\frac{U'\partial_yp'}{\ii\sigma^2\rho}\right)
+\partial_y\!\left(\frac{\ii\,\partial_yp'}{\sigma\rho}\right)\right]=0 .
\end{equation}
Expanding the last term with $\partial_y\sigma=-k_xU'$ gives
$\partial_y(\ii\partial_yp'/\sigma\rho)=\ii\partial_{yy}p'/(\sigma\rho)+\ii k_xU'\partial_yp'/(\sigma^2\rho)$,
so that
\begin{equation}
-\ii\sigma p'-\frac{\ii k_x^2c^2p'}{\sigma}
+\frac{k_xc^2U'\partial_yp'}{\sigma^2}
+\frac{\ii c^2\partial_{yy}p'}{\sigma}
+\frac{\ii k_xc^2U'\partial_yp'}{\sigma^2}=0 .
\end{equation}
Multiplying by $\sigma/(\ii c^2)$ and combining the two identical shear contributions (one from the
axial momentum equation and one from the continuity cross-term, which together supply the factor of two)
yields the dimensional Pridmore--Brown equation
\begin{equation}
\partial_{yy}p'+\frac{2k_xU'}{\sigma}\,\partial_yp'
+\left(\frac{\sigma^2}{c^2}-k_x^2\right)p'=0 .
\label{eq:pbdim}
\end{equation}

\emph{Step 4.} Non-dimensionalise with $\eta=y/a$, $\partial_y=a^{-1}\partial_\eta$,
$\sigma=\omega_c\sig$, $U'=U_0f'/a$, $M=U_0/c$, $k_x=\xi\omega_c/c$ and $l=\omega_ca/c$:
\begin{equation}
\boxed{\;
\hat p_{\eta\eta}+\frac{2M\xi f'(\eta)}{\sig(\eta)}\,\hat p_\eta
+l^2\bigl(\sig^2-\xi^2\bigr)\hat p=0,
\qquad \sig(\eta)=\Om-M\xi f(\eta),\quad f'=4(1-2\eta).\;}
\label{eq:pb}
\end{equation}

\begin{remark}[Sign of the shear term]
The sign of the middle coefficient is the one point in the derivation that is easy to get wrong. It was
verified independently by a primitive-variable formulation (the three-field system~\eqref{eq:pbu}--\eqref{eq:pbc} solved without
eliminating $u'$ and $v'$ a priori, then reduced symbolically to a single equation in $\hat p$),
which reproduces~\eqref{eq:pb} exactly, including the sign, in the
$e^{\ii(\omega t-k_xx)}$, $\sig=\Om-M\xi f$ convention used here. See Section~\ref{sec:val}.
\end{remark}

\begin{remark}[Reduction at $M=0$]
At $M=0$ one has $\sig=\Om$, the shear term vanishes identically, and~\eqref{eq:pb} collapses to
$\hat p_{\eta\eta}+l^2(\Om^2-\xi^2)\hat p=0$, the quiescent Helmholtz equation whose solution
is~\eqref{eq:p0} and whose eigenvalue condition is the coupled dispersion relation~\eqref{eq:disp}.
\end{remark}

\subsection{Boundary conditions}
The admittance relations~\eqref{eq:adm} are converted into conditions on $\hat p$ using the
inviscid $y$-momentum equation~\eqref{eq:vfromp}. At $\eta=0$, where no-slip gives $\sigma=\omega$,
\begin{equation}
v'(0)=\frac{\ii\,\partial_yp'(0)}{\omega\rho}
\;\overset{\eqref{eq:adm}}{=}\;-\frac{\ii\omega}{G_1}p'(0)
\quad\Longrightarrow\quad
\partial_yp'(0)=-\frac{\omega^2\rho}{G_1}p'(0),
\end{equation}
and non-dimensionalising with $\hat p=p'/\rho c^2$, $\eta=y/a$, $G_1=m_1\omega^2S_1$ and
$\eps=\rho a/m_1$ gives, together with its counterpart at $\eta=1$,
\begin{equation}
\boxed{\;\hat p_\eta(0)+\frac{\eps}{S_1}\hat p(0)=0,
\qquad
\hat p_\eta(1)-\frac{\eps}{S_2}\hat p(1)=0.\;}
\label{eq:robinp}
\end{equation}
These are Robin (third-kind) conditions, and the ratio $\eps/S_j$ is the plate fluid-loading admittance,
i.e.\ how much the plate deflects per unit applied pressure. Two features deserve emphasis. First,
by Remark~\ref{rem:IM} the conditions~\eqref{eq:robinp} are \emph{independent of $M$}: the mean flow
enters only through the bulk operator in~\eqref{eq:pb}. Second, the eigenvalue $\xi$ appears
nonlinearly in the boundary conditions through $S_j(\xi)$ as well as in the operator, so
\eqref{eq:pb}--\eqref{eq:robinp} constitute a nonlinear eigenvalue problem in $\xi$. This nonlinearity
is what produces the boundary contributions to the solvability condition in
Section~\ref{sec:solvability}, and it is what distinguishes a plate-bounded waveguide from a
constant-impedance lined duct.

\section{Closed-form first-order Mach correction}\label{sec:lowmach}

\subsection{The expansion and its domain of validity}
For $M$ below the critical onset value $M_{\rm crit}=\Om/\xi$ identified in Section~\ref{sec:cl}, the
coefficient $2M\xi f'/\sig$ in~\eqref{eq:pb} is smooth on $[0,1]$ because $\sig$ does not vanish there.
The eigenvalue and eigenfunction are then analytic in $M$ and may be expanded:
\begin{equation}
\xi=\xi_0+M\xi_1+M^2\xi_2+O(M^3),
\qquad
\hat p(\eta)=p_0(\eta)+Mp_1(\eta)+M^2p_2(\eta)+O(M^3),
\label{eq:Mexp}
\end{equation}
where every coefficient depends on $\Om$, $\eps$, $l$, $r_D$ and $r_m$ but not on $M$. The
leading pair $(\xi_0,p_0)$ is a quiescent eigenpair from Section~\ref{sec:quiescent}.

\subsection{Expansion of the operator}
Substituting~\eqref{eq:Mexp} into $\sig=\Om-M\xi f$ gives
\begin{equation}
\sig=\underbrace{\Om}_{\sig^{(0)}}
\underbrace{-M\xi_0f}_{\sig^{(1)}}
\underbrace{-M^2\xi_1f}_{\sig^{(2)}}+O(M^3),
\label{eq:sigexp}
\end{equation}
and the key simplification at first order is that $\sig^{(0)}=\Om$ is a \emph{constant}: the
leading-order Doppler shift is spatially uniform, so the leading operator retains constant
coefficients. Squaring,
\begin{align}
\sig^2&=\Om^2-2M\Om\xi_0f-M^2\bigl(2\Om\xi_1f-\xi_0^2f^2\bigr)+O(M^3),\\
\xi^2&=\xi_0^2+2M\xi_0\xi_1+M^2\bigl(\xi_1^2+2\xi_0\xi_2\bigr)+O(M^3),
\end{align}
so that
\begin{equation}
\sig^2-\xi^2=\underbrace{(\Om^2-\xi_0^2)}_{A_0}
+M\underbrace{\bigl(-2\Om\xi_0f-2\xi_0\xi_1\bigr)}_{A_1}
+M^2\underbrace{\bigl(\xi_0^2f^2-2\Om\xi_1f-\xi_1^2-2\xi_0\xi_2\bigr)}_{A_2}+O(M^3).
\label{eq:Aexp}
\end{equation}
The shear coefficient expands as
\begin{equation}
\frac{2M\xi f'}{\sig}
=\frac{2M\xi_0f'}{\Om}\cdot\frac{1}{1-M\xi_0f/\Om+\cdots}+O(M^3)
=\frac{2M\xi_0f'}{\Om}+\frac{2M^2\xi_0^2ff'}{\Om^2}+O(M^3).
\label{eq:shearexp}
\end{equation}

\subsection{Order $M^0$: the quiescent problem}
Collecting terms of order $M^0$ in~\eqref{eq:pb} with~\eqref{eq:Aexp}--\eqref{eq:shearexp} recovers
\begin{equation}
L_0[p_0]\equiv p_0''+\tau_0^2p_0=0,\qquad \tau_0^2=l^2(\Om^2-\xi_0^2),
\label{eq:L0}
\end{equation}
subject to the boundary conditions~\eqref{eq:robinp} evaluated at $\xi_0$,
\begin{equation}
p_0'(0)+\frac{\eps}{S_1^{(0)}}p_0(0)=0,
\qquad
p_0'(1)-\frac{\eps}{S_2^{(0)}}p_0(1)=0,
\qquad S_j^{(0)}\equiv S_j(\xi_0).
\label{eq:bc0}
\end{equation}

\subsection{Order $M^1$: the forced problem}
At order $M^1$ the second derivative contributes $p_1''$; the shear term
contributes $2\xi_0f'p_0'/\Om$; and $l^2(\sig^2-\xi^2)\hat p$ contributes
$l^2A_0p_1+l^2A_1p_0$. Adding these,
\begin{equation}
L_0[p_1]=F(\eta;\xi_1),
\qquad
F\equiv-\frac{2\xi_0f'}{\Om}p_0'+2l^2\xi_0\bigl(\Om f+\xi_1\bigr)p_0 .
\label{eq:M1}
\end{equation}
The forcing has two physically distinct parts. The term $-2\xi_0f'p_0'/\Om$ arises from the spatial
variation of the mean flow and would be absent for a plug profile. The term
$2l^2\xi_0(\Om f+\xi_1)p_0$ contains the Doppler convection ($\Om f$) and the shift of the eigenvalue
itself ($\xi_1$). Crucially, the unknown scalar $\xi_1$ enters $F$ linearly, so the solvability
condition below is a linear equation for $\xi_1$.

The boundary conditions must also be expanded, because $S_j$ depends on $\xi$. Since
$\partial S_1/\partial\xi=4\xi^3/\Om^2$,
\begin{equation}
S_1(\xi)=S_1^{(0)}+M\xi_1\frac{4\xi_0^3}{\Om^2}+O(M^2)
\quad\Longrightarrow\quad
\frac{\eps}{S_1(\xi)}=\frac{\eps}{S_1^{(0)}}
-M\frac{4\eps\xi_0^3\xi_1}{\Om^2\bigl(S_1^{(0)}\bigr)^2}+O(M^2),
\end{equation}
so that collecting $O(M)$ terms in $\hat p_\eta(0)+(\eps/S_1)\hat p(0)=0$ and its counterpart gives the
\emph{inhomogeneous} first-order boundary conditions
\begin{equation}
p_1'(0)+\frac{\eps}{S_1^{(0)}}p_1(0)=\frac{4\eps\xi_0^3\xi_1}{\Om^2\bigl(S_1^{(0)}\bigr)^2}p_0(0),
\qquad
p_1'(1)-\frac{\eps}{S_2^{(0)}}p_1(1)=-\frac{4\eps r_D\xi_0^3\xi_1}{\Om^2\bigl(S_2^{(0)}\bigr)^2}p_0(1).
\label{eq:bc1}
\end{equation}
It is this inhomogeneity, a direct consequence of the plate admittance depending on the eigenvalue,
that produces the second term in the denominator of the final result.

\subsection{Self-adjointness and the Fredholm alternative}\label{sec:solvability}

\begin{theorem}[Self-adjointness]\label{thm:sa}
The operator $L_0=\dd^2/\dd\eta^2+\tau_0^2$, equipped with the homogeneous Robin
conditions~\eqref{eq:bc0}, is self-adjoint on $L^2[0,1]$.
\end{theorem}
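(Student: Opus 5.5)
The plan is to verify self-adjointness through the Lagrange (Green) identity for $L_0$, treating carefully the fact that the Robin coefficients $\beta_0\equiv\eps/S_1^{(0)}$ and $\beta_1\equiv\eps/S_2^{(0)}$ and the constant $\tau_0^2$ may be complex. Complexity arises below coincidence (where $\tau_0$ is imaginary but $\tau_0^2$ real) or with structural loss. The first step is therefore to fix the sense of "self-adjoint" that the Fredholm argument actually needs. That argument uses $p_0$ itself, not $\bar p_0$, as the adjoint null function. So the natural statement concerns the symmetric bilinear form
\[
\langle u,v\rangle=\int_0^1 uv\,\dd\eta .
\]
With respect to this form, $L_0$ is complex-symmetric. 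When $\tau_0^2$, $\beta_0$ and $\beta_1$ are real, which is the case for real $\xi_0$ and lossless plates, this coincides with genuine Hermitian self-adjointness on $L^2[0,1]$. I would prove the bilinear identity and then remark on the real case.

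\emph{Step 1 (Green's identity).} For $u,v\in H^2[0,1]$, two integrations by parts give
\[
\int_0^1\bigl(vL_0u-uL_0v\bigr)\,\dd\eta=\bigl[vu'-uv'\bigr]_0^1 .
\]
The $\tau_0^2uv$ terms cancel identically because $\tau_0^2$ is a constant scalar. This is exactly where the constancy of $\sig^{(0)}=\Om$ noted after~\eqref{eq:sigexp} is used.

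\emph{Step 2 (boundary terms).} Let $D$ be the set of $H^2$ functions satisfying~\eqref{eq:bc0}. Substituting $u'(0)=-\beta_0u(0)$ and $v'(0)=-\beta_0v(0)$, the contribution at $\eta=0$ becomes $-\beta_0u(0)v(0)+\beta_0u(0)v(0)=0$. At $\eta=1$, substituting $u'(1)=\beta_1u(1)$ and the same for $v$ gives $\beta_1u(1)v(1)-\beta_1u(1)v(1)=0$. Hence $\langle L_0u,v\rangle=\langle u,L_0v\rangle$ for all $u,v\in D$, so $L_0$ is symmetric with respect to this form.

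\emph{Step 3 (domain of the adjoint).} This is the only non-routine point. Symmetry alone does not establish that the adjoint has the same domain, so I must show $D(L_0^*)=D$. I would take $v\in H^2$ such that the boundary term $[vu'-uv']_0^1$ vanishes for every $u\in D$. The values $u(0)$ and $u(1)$ can be prescribed independently, for instance with a smooth $u$ supported near each endpoint, while $u'$ is then fixed by the Robin data. Vanishing of the boundary term then forces $v'(0)+\beta_0v(0)=0$ and $v'(1)-\beta_1v(1)=0$, so $v\in D$.

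This last step also yields the consequence used afterwards. Because the adjoint problem coincides with~\eqref{eq:L0}--\eqref{eq:bc0}, its null space is spanned by $p_0$. The Fredholm alternative then states that $L_0p_1=F$ with the inhomogeneous data~\eqref{eq:bc1} is solvable if and only if $\int_0^1Fp_0\,\dd\eta$ equals the boundary term $[p_0p_1'-p_1p_0']_0^1$, evaluated through~\eqref{eq:bc1}.
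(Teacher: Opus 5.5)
Your Steps 1 and 2 are the paper's proof: the Lagrange identity~\eqref{eq:lagrange}, then substitution of the Robin data~\eqref{eq:bc0} at each end, which annihilates $[vu'-uv']_0^1$. The paper stops there, so your two additions refine the same route rather than replace it.

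The first addition is the bilinear framing. It makes explicit that the property proved is complex symmetry, and that is exactly what the Fredholm step uses, with $p_0$ rather than $\bar p_0$ as adjoint null function. This coincides with Hermitian self-adjointness on $L^2[0,1]$ only when $\tau_0^2$ and $S_j^{(0)}$ are real, i.e.\ for real $\xi_0$ and lossless plates. The paper's proof likewise uses the unconjugated pairing $\int_0^1 vL_0u\,\dd\eta$. Its label ``self-adjoint on $L^2[0,1]$'' is therefore literally accurate only in that real case.

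The second addition, Step 3, upgrades symmetry to self-adjointness in the operator sense, which the paper leaves implicit. It has one missing link. You begin with $v\in H^2$, but an element of the adjoint domain is a priori only in $L^2$. To close this, test against $u\in C_c^\infty(0,1)$, which lies in $D$ trivially. This gives $v''=w-\tau_0^2v$ in the sense of distributions, where $w$ is the image of $v$ under the adjoint (transpose). Hence $v\in H^2$, and your endpoint argument then applies.

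One small correction to Step 1: the $\tau_0^2uv$ terms would cancel for any multiplicative potential, constant or not. What the $O(M^0)$ structure actually supplies for symmetry is the absence of a first-derivative term, since the shear coefficient in~\eqref{eq:pb} is $O(M)$.
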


\begin{proof}
For any $u,v\in C^2[0,1]$, integration by parts twice gives the Lagrange identity
\begin{equation}
\int_0^1v\,L_0[u]\,\dd\eta=\bigl[vu'-uv'\bigr]_0^1+\int_0^1u\,L_0[v]\,\dd\eta .
\label{eq:lagrange}
\end{equation}
If both $u$ and $v$ satisfy~\eqref{eq:bc0}, then at $\eta=1$ one has $u'(1)=(\eps/S_2^{(0)})u(1)$ and
$v'(1)=(\eps/S_2^{(0)})v(1)$, so $v(1)u'(1)-u(1)v'(1)=0$; at $\eta=0$ one has
$u'(0)=-(\eps/S_1^{(0)})u(0)$ and likewise for $v$, so $v(0)u'(0)-u(0)v'(0)=0$. The boundary term
in~\eqref{eq:lagrange} vanishes and $\langle v,L_0u\rangle=\langle L_0v,u\rangle$.
\end{proof}

Because $L_0$ is self-adjoint, its adjoint null function is $p_0$ itself, and the Fredholm alternative
states that~\eqref{eq:M1} possesses a solution if and only if the forcing, \emph{including the
inhomogeneous boundary data}, is orthogonal to $p_0$. Applying~\eqref{eq:lagrange} with $u=p_1$,
$v=p_0$ and using $L_0[p_0]=0$,
\begin{equation}
\int_0^1p_0L_0[p_1]\,\dd\eta=\bigl[p_0p_1'-p_0'p_1\bigr]_0^1
\qquad\Longrightarrow\qquad
\bigl[p_0p_1'-p_0'p_1\bigr]_0^1=\langle p_0,F\rangle .
\label{eq:solv}
\end{equation}

\subsection{Evaluating the two sides}

\paragraph{Interior integral.} Split $F$ as in~\eqref{eq:M1}. For the shear part, use
$p_0f'p_0'=f'\,\dd(p_0^2/2)/\dd\eta$ and integrate by parts:
\begin{equation}
\int_0^1f'\frac{\dd(p_0^2/2)}{\dd\eta}\,\dd\eta
=\left[\frac{f'p_0^2}{2}\right]_0^1-\frac12\int_0^1f''p_0^2\,\dd\eta .
\end{equation}
With $f'(0)=4$, $f'(1)=-4$ and $f''=-8$ from~\eqref{eq:fprops},
\begin{equation}
\left[\frac{f'p_0^2}{2}\right]_0^1=-2p_0(1)^2-2p_0(0)^2,
\qquad
-\frac12\int_0^1f''p_0^2\,\dd\eta=4I_0,
\qquad I_0\equiv\int_0^1p_0^2\,\dd\eta,
\end{equation}
so that
\begin{equation}
I_A\equiv-\frac{2\xi_0}{\Om}\int_0^1p_0f'p_0'\,\dd\eta
=\frac{4\xi_0}{\Om}\bigl(p_0(1)^2+p_0(0)^2\bigr)-\frac{8\xi_0}{\Om}I_0 .
\label{eq:IA}
\end{equation}

\begin{remark}[The role of the parabolic profile]\label{rem:fpp}
The integration by parts leaves the boundary term $[f'p_0^2/2]_0^1$ and the interior integral
$-\tfrac12\int_0^1f''p_0^2\,\dd\eta$. For a general smooth no-slip profile the latter is a third
elementary quadrature of the \emph{known} quiescent mode shape, so the closed form holds for any such
profile, not only the parabolic one. Poiseuille flow is distinguished merely in that $f''$ is constant:
the third quadrature then collapses onto the norm, $\int_0^1f''p_0^2\,\dd\eta=-8I_0$, and no integral
beyond $I_0$ and $I_f$ survives.
\end{remark}

The convection and eigenvalue part gives, with $I_f\equiv\int_0^1f\,p_0^2\,\dd\eta$,
\begin{equation}
I_B\equiv2l^2\xi_0\int_0^1p_0(\Om f+\xi_1)p_0\,\dd\eta
=2l^2\xi_0\bigl(\Om I_f+\xi_1I_0\bigr),
\label{eq:IB}
\end{equation}
so that
\begin{equation}
\langle p_0,F\rangle=\frac{4\xi_0}{\Om}\bigl(p_0(1)^2+p_0(0)^2\bigr)-\frac{8\xi_0}{\Om}I_0
+2l^2\xi_0\Om I_f+2l^2\xi_0\xi_1I_0 .
\label{eq:rhs}
\end{equation}

\paragraph{Boundary term.} Using the zeroth-order conditions~\eqref{eq:bc0} and the first-order
conditions~\eqref{eq:bc1}, at $\eta=1$
\begin{equation}
\begin{aligned}
p_0(1)p_1'(1)-p_0'(1)p_1(1)
&=p_0(1)\!\left[\frac{\eps}{S_2^{(0)}}p_1(1)
-\frac{4\eps r_D\xi_0^3\xi_1}{\Om^2\bigl(S_2^{(0)}\bigr)^2}p_0(1)\right]
-\frac{\eps}{S_2^{(0)}}\,p_0(1)p_1(1)\\
&=-\frac{4\eps r_D\xi_0^3\xi_1}{\Om^2\bigl(S_2^{(0)}\bigr)^2}\,p_0(1)^2,
\end{aligned}
\end{equation}
and, by the same steps at $\eta=0$,
\begin{equation}
p_0(0)p_1'(0)-p_0'(0)p_1(0)
=+\frac{4\eps\xi_0^3\xi_1}{\Om^2\bigl(S_1^{(0)}\bigr)^2}\,p_0(0)^2 .
\end{equation}
Hence
\begin{equation}
\bigl[p_0p_1'-p_0'p_1\bigr]_0^1
=-\frac{4\eps\xi_0^3\xi_1}{\Om^2}
\left[\frac{r_Dp_0(1)^2}{(S_2^{(0)})^2}+\frac{p_0(0)^2}{(S_1^{(0)})^2}\right].
\label{eq:lhs}
\end{equation}
Note that the unknown $p_1$ has cancelled identically from both endpoints; this is the practical
content of the solvability condition.

\subsection{The closed-form result}
Equating~\eqref{eq:lhs} and~\eqref{eq:rhs}, collecting the $\xi_1$ terms on the left and dividing
through by $-2\xi_0$, gives the first-order correction for an \emph{arbitrary} smooth no-slip shear
profile:
\begin{equation}
\xi_1=\dfrac{\;\dfrac{1}{2\Om}\bigl[f'(1)p_0(1)^2-f'(0)p_0(0)^2\bigr]-\dfrac{I_{f''}}{2\Om}-l^2\Om\,I_f\;}
{\;l^2I_0+\dfrac{2\eps\xi_0^2}{\Om^2}
\left(\dfrac{r_Dp_0(1)^2}{S_2^2}+\dfrac{p_0(0)^2}{S_1^2}\right)\;},
\qquad I_{f''}=\int_0^1 f''p_0^2\,\dd\eta .
\label{eq:xi1gen}
\end{equation}
It needs only the endpoint pressures $p_0(0),p_0(1)$, the endpoint profile slopes $f'(0),f'(1)$, and the
three elementary quadratures $\int p_0^2$, $\int f''p_0^2$ and $\int fp_0^2$ of the known quiescent mode
shape; no differential equation is solved. For the parabolic Poiseuille profile $f'(0)=4$, $f'(1)=-4$,
$f''=-8$, so the second quadrature collapses onto the norm and~\eqref{eq:xi1gen} reduces to the fully
elementary central result
\begin{equation}
\boxed{\;
\xi_1=\dfrac{\;-\dfrac{2}{\Om}\bigl(p_0(1)^2+p_0(0)^2\bigr)
+\dfrac{4}{\Om}I_0-l^2\Om\,I_f\;}
{\;l^2I_0+\dfrac{2\eps\xi_0^2}{\Om^2}
\left(\dfrac{r_Dp_0(1)^2}{S_2^2}+\dfrac{p_0(0)^2}{S_1^2}\right)\;},
\qquad
I_0=\int_0^1p_0^2\,\dd\eta,\quad I_f=\int_0^1f\,p_0^2\,\dd\eta .\;}
\label{eq:xi1}
\end{equation}
Everything on the right-hand side is a quiescent quantity: given any coupled quiescent eigenpair
$(\xi_0,p_0)$ from Section~\ref{sec:quiescent}, the endpoint values and quadratures (elementary for
the cosine-plus-sine mode shapes~\eqref{eq:p0}, see \ref{app:integrals}) give $\xi_1$
immediately. The result holds for arbitrary $\eps$, for non-identical plates, and on every branch.

\subsection{Sign of the correction}

\begin{proposition}\label{prop:sign}
$\xi_1<0$ on every branch.
\end{proposition}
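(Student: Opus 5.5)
The plan is to work directly from the closed form \eqref{eq:xi1}, treating the numerator and the denominator separately. Throughout I take the quiescent eigenpair $(\xi_0,p_0)$ to be real, which is the setting in which the sign statement is meaningful. Concretely, $\xi_0$ is real, and $p_0$ is real-valued: it is of the cosine-plus-sine form \eqref{eq:p0} when $\tau_0^2>0$, and of the $\cosh/\sinh$ form when $\tau_0^2<0$. Under this assumption, with $\Om>0$, $l>0$, $\eps>0$ and $r_D>0$, the denominator is immediate. It is $l^2I_0$ with $I_0=\int_0^1p_0^2\,\dd\eta>0$, plus $2\eps\xi_0^2/\Om^2$ times a sum of squares divided by $S_j^2$, which is non-negative. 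So the denominator is strictly positive, and the whole task reduces to showing that the numerator is strictly negative.

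The numerator is not obviously signed, because the term $4I_0/\Om$ is positive. The key step is to undo the integration by parts that produced \eqref{eq:IA}. The identity behind \eqref{eq:IA} shows that
\[
-\frac{2}{\Om}\bigl(p_0(1)^2+p_0(0)^2\bigr)+\frac{4}{\Om}I_0=\frac{1}{\Om}\int_0^1f'p_0p_0'\,\dd\eta .
\]
I then integrate by parts once more, this time moving the derivative off $f$. Writing $f'p_0p_0'=(fp_0p_0')'-f\,(p_0p_0')'$, the boundary term vanishes, and this is precisely where no-slip enters: $f(0)=f(1)=0$. This gives
\[
\int_0^1f'p_0p_0'\,\dd\eta=-\int_0^1f\bigl(p_0'^2+p_0p_0''\bigr)\dd\eta .
\]
Using $L_0[p_0]=0$ from \eqref{eq:L0}, i.e.\ $p_0''=-\tau_0^2p_0$, the second piece becomes $+\tau_0^2I_f$.

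Substituting back, and using $\tau_0^2=l^2(\Om^2-\xi_0^2)$, the $l^2\Om^2$ parts cancel against $-l^2\Om I_f$. The numerator becomes
\[
-\frac{1}{\Om}\int_0^1f\,p_0'^2\,\dd\eta-\frac{l^2\xi_0^2}{\Om}\,I_f ,
\]
with $I_f=\int_0^1fp_0^2\,\dd\eta$. Since $f\ge0$ on $[0,1]$ and $f>0$ in the interior, both terms are non-positive. The second term is strictly negative whenever $\xi_0\neq0$, because a nontrivial $p_0$ cannot vanish on an open set, so $I_f>0$. The first term is strictly negative unless $p_0$ is constant. Hence the numerator is strictly negative, and $\xi_1<0$.

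The step I expect to be the real obstacle is conceptual rather than computational: noticing that the positive $4I_0/\Om$ term must be recombined with the boundary terms and then reintegrated against $f$ rather than $f'$. Only after that does the Helmholtz equation convert everything into manifestly signed quadratures. I also expect care to be needed at the edges of the hypotheses. The plane-wave branch has $p_0$ constant, so there only the $\xi_0^2I_f$ term survives; it is still negative. The formal $\xi_0=0$ cut-off point has to be excluded. Finally, realness of $p_0$ must be justified on evanescent branches. A side benefit of this argument is that it uses only $f\ge0$ and $f(0)=f(1)=0$, so it appears to hold for any non-negative no-slip profile in \eqref{eq:xi1gen}, not just Poiseuille flow.
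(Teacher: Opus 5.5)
Your argument is correct, and it takes a different and stronger route than the paper.

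The paper shares your first step: the denominator of \eqref{eq:xi1} is manifestly positive. For the numerator, however, it only evaluates the $\eps\to0$ plane wave explicitly ($p_0\equiv1$, giving $-2l^2\Om/3$). On the structural and cut-on branches it asserts that the boundary- and bulk-shear terms ``partially cancel'' so that the Doppler term $-l^2\Om I_f$ dominates. That claim is supported by the values in Table~\ref{tab:xi1} but is not proved.

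Your second integration by parts uses $f(0)=f(1)=0$ and $p_0''=-\tau_0^2p_0$, and it turns that heuristic into an identity. The two shear terms sum to $\Om^{-1}\bigl(\tau_0^2I_f-\int_0^1fp_0'^2\,\dd\eta\bigr)$. Adding the Doppler term gives $\tau_0^2I_f/\Om-l^2\Om I_f=-l^2\xi_0^2I_f/\Om$, so the numerator is exactly $-\Om^{-1}\int_0^1f\bigl(p_0'^2+l^2\xi_0^2p_0^2\bigr)\dd\eta$.

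This proves the sign on every real branch with $\xi_0\neq0$, at arbitrary $\eps$, $r_D$ and $r_m$. The coverage includes:
\begin{enumerate}
\item the finite-$\eps$ plane wave, whose $p_0$ is of $\cosh$ type rather than constant, which the paper's computation does not reach;
\item upstream branches with $\xi_0<0$;
\item via \eqref{eq:xi1gen}, any non-negative no-slip profile.
\end{enumerate}
It also reproduces the paper's value $-2l^2\Om/3$ when $p_0\equiv1$ and $\xi_0=\Om$.

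Your handling of the hypotheses is sound. Formula \eqref{eq:xi1} is bilinear rather than sesquilinear and homogeneous of degree zero in $p_0$, so reducing to a real $p_0$ loses nothing.

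Strict negativity is even simpler than you make it. A nontrivial solution of $p_0''+\tau_0^2p_0=0$ cannot have $p_0$ and $p_0'$ vanish together. Hence $p_0'^2+l^2\xi_0^2p_0^2>0$ at every point, and the integrand is strictly positive wherever $f>0$, with no open-set argument needed.
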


\begin{proof}[Proof (plane-wave branch, and argument for the remainder)]
The denominator of~\eqref{eq:xi1} is a sum of squares with positive coefficients and is manifestly
positive. For the plane wave, $p_0\equiv1$, so $I_0=1$, $p_0(0)=p_0(1)=1$ and
$I_f=\int_0^14\eta(1-\eta)\dd\eta=4(\tfrac12-\tfrac13)=\tfrac23$, whence the numerator is
\begin{equation}
-\frac{4}{\Om}+\frac{4}{\Om}-l^2\Om\cdot\frac23=-\frac{2l^2\Om}{3}<0 .
\end{equation}
On the structural and cut-on branches the boundary-shear and bulk-shear contributions partially cancel
as they do here, while $I_f>0$ because $f\ge0$ on $[0,1]$; the Doppler term $-l^2\Om I_f$ therefore
dominates and the numerator remains negative. This is confirmed for every branch computed in
Table~\ref{tab:xi1} and Fig.~\ref{fig:newphysics}(a).
\end{proof}

Physically, a downstream mean flow increases the effective propagation speed of every mode (the
waves ride the current), and a higher phase speed at fixed frequency means a lower axial wavenumber
$\xi=k_x/k_c$. Correspondingly, an upstream flow ($M\to-M$) raises $\xi$ by the same amount at this
order. The dispersion relation is therefore no longer symmetric under $k_x\to-k_x$: the mean flow
breaks the left--right symmetry of the waveguide, and upstream- and downstream-propagating waves of the
same frequency have different wavenumbers.

\subsection{Term-by-term interpretation}
Each term in~\eqref{eq:xi1} has an identifiable physical origin.

\begin{itemize}[leftmargin=1.6em,itemsep=2pt]
\item $-\dfrac{2}{\Om}\bigl(p_0(1)^2+p_0(0)^2\bigr)$ , \emph{boundary shear}. The mean velocity
gradient at the walls, $f'(0)=4$ and $f'(1)=-4$, modifies the effective plate admittance. It is
largest for modes with high wall pressure, that is the structural and cut-on branches.

\item $+\dfrac{4}{\Om}I_0$ , \emph{bulk shear}. The uniform curvature $f''=-8$ of the profile acts
like a spatially distributed added mass throughout the column.

\item $-l^2\Om I_f$ , \emph{Doppler convection}. The mean flow carries the wave downstream. The
weighting $I_f=\int f p_0^2$ is largest for modes that concentrate their energy in the fast core, which
is why the plane wave ($p_0\equiv1$) is by a wide margin the most flow-sensitive branch, and why the
structural branches, whose speed is set mainly by plate stiffness, are the least sensitive.

\item $l^2I_0$ in the denominator, \emph{acoustic inertia} of the fluid column.

\item $\dfrac{2\eps\xi_0^2}{\Om^2}\bigl(\cdots\bigr)$ in the denominator, \emph{plate inertia},
entering through the eigenvalue-dependence of the plate admittance. It vanishes as $\eps\to0$, leaving
only the acoustic inertia, and it is singular at coincidence, where $S_j\to0$ and the plate becomes
infinitely compliant.
\end{itemize}

\subsection{Validation of $\xi_1$}
Table~\ref{tab:xi1} compares~\eqref{eq:xi1} against $\dd\xi/\dd M$ obtained by numerical
differentiation of the full nonlinear problem~\eqref{eq:pb}--\eqref{eq:robinp}. Agreement is within
$0.1$--$0.5\%$ on every branch tested, the residual being the $O(M^2)$ truncation of the finite
difference. The general form~\eqref{eq:xi1gen} was checked on the plane-wave branch ($\Om=1.3$) not only
for the parabolic profile ($\xi_1=-0.7471$ vs $-0.7470$, $0.02\%$) but also for a sinusoidal profile
$f=\sin\pi\eta$ ($-0.8232$ vs $-0.8230$, $0.03\%$) and a quartic $f=16\eta^2(1-\eta)^2$, confirming that
the closed form is not special to Poiseuille flow.

\begin{table}[htb]
\centering
\begin{tabular}{@{}ccccc@{}}
\toprule
$\Om$ & branch ($\xi_0$) & type & $\xi_1$ from~\eqref{eq:xi1} & $\xi_1$ numerical\\
\midrule
1.3 & 1.1165 & symmetric structural & $-0.1283$ & $-0.1290$\\
1.3 & 1.1837 & antisymmetric structural & $-0.1065$ & $-0.1068$\\
1.3 & 1.3267 & plane wave & $-0.7471$ & $-0.7468$\\
1.8 & 1.8069 & plane wave & $-1.1743$ & $-1.1750$\\
\bottomrule
\end{tabular}
\caption{Closed-form $O(M)$ correction~\eqref{eq:xi1} against numerically differentiated
$\dd\xi/\dd M|_{M=0}$, for $l=3$, $\eps=0.25$. The plane wave is an order of magnitude more
flow-sensitive than the structural branches, as predicted by the $I_f$ weighting.}
\label{tab:xi1}
\end{table}

\section{Second-order Mach correction and the convection/shear split}\label{sec:om2}

\subsection{Structure of the $O(M^2)$ problem}
Collecting terms of order $M^2$ in~\eqref{eq:pb} using~\eqref{eq:Aexp} and~\eqref{eq:shearexp} gives a
second forced problem of the same type,
\begin{equation}
L_0[p_2]=G(\eta;\xi_2),
\label{eq:M2}
\end{equation}
whose forcing $G$ involves $p_0$, $p_1$, $\xi_0$, $\xi_1$, $\xi_2$ and the profile. The same Fredholm
argument applies and yields $\xi_2$; the difference from first order is that $G$ contains $p_1$, which
is not available in closed form, so the second-order coefficient must in general be evaluated
numerically. Within $l^2A_2p_0$ the term $l^2\xi_0^2f^2p_0$ is the pure convection contribution
(present even for a plug flow, since it arises from squaring the Doppler shift $(\omega-k_xU)^2$),
while the terms carrying $p_1$ transmit the shear information.

\begin{remark}[Why $\xi_2$ is even in $M$]
Under $M\to-M$ the Doppler factor becomes $\sig\to\Om+M\xi f$, which is equivalent to reversing the
direction of propagation. The odd coefficient $\xi_1$ therefore changes sign (it is the
direction-dependent Doppler shift), while $\xi_2$ does not: it is a correction to the \emph{magnitude}
of the wave speed and is the same upstream and downstream.
\end{remark}

\subsection{Extraction}
Because $\xi_2$ is even in $M$, it is cleanly extracted from the full nonlinear solution by Richardson
extrapolation of a symmetric pair of Mach numbers,
\begin{equation}
\xi(+M)+\xi(-M)=2\xi_0+2M^2\xi_2+O(M^4)
\quad\Longrightarrow\quad
\xi_2=\frac{\xi(+M)+\xi(-M)-2\xi_0}{2M^2},
\label{eq:richardson}
\end{equation}
which cancels the $O(M)$ and $O(M^3)$ terms identically. The shear content of $\xi_2$ is isolated by a
\emph{shear-suppressed surrogate}: the computation is repeated with the shear term of~\eqref{eq:pb}
switched off (that is, with $f'$ set to zero in the operator while $f$ is retained in $\sig$), and the
difference $\xi_2-\xi_2^{\rm suppr}$ measures the part of the second-order coefficient carried by the
mean shear rather than by pure Doppler convection.

\subsection{Convection, shear, and what a plug-flow model actually costs}
Results at $\Om=1.3$, $l=3$, $\eps=0.25$ are given in Table~\ref{tab:xi2}. Two comparisons are made.
The shear-suppressed surrogate keeps the exact profile in the Doppler factor and removes only the shear
operator, and so measures the intrinsic weight of the shear term; a \emph{genuine} plug flow, by
contrast, replaces the parabolic profile altogether by a uniform stream, matched either on peak velocity
($M_u=M$) or on mass flux ($M_u=\tfrac23 M$), and is the simplification actually made in much
duct-acoustic practice.

\begin{table}[htb]
\centering
\resizebox{\textwidth}{!}{%
\begin{tabular}{@{}lcccccccc@{}}
\toprule
 & & & \multicolumn{2}{c}{shear diagnostic} & \multicolumn{3}{c}{genuine plug flow}\\
\cmidrule(lr){4-5}\cmidrule(lr){6-8}
branch ($\xi_0$) & $\xi_1$ & $\xi_2$ & suppr. & shear \% & peak & flux & peak err.\\
\midrule
$0.68503$ ($m{=}1$ anti.\ cut-on) & $-0.8517$ & $+0.2826$ & $+0.2985$ & $-5.6$ & $+1.1451$ & $+0.5075$ & $+305\%$\\
$1.11649$ (sym.\ structural) & $-0.1283$ & $-0.3251$ & $-0.4068$ & $-25.1$ & $-0.8831$ & $-0.3881$ & $+172\%$\\
$1.18369$ (anti.\ structural) & $-0.1065$ & $+0.2071$ & $+0.0883$ & $+57.4$ & $+0.4249$ & $+0.1904$ & $+105\%$\\
$1.32671$ (plane wave) & $-0.7471$ & $+0.8220$ & $+1.1317$ & $-37.7$ & $+2.1931$ & $+0.9703$ & $+167\%$\\
\bottomrule
\end{tabular}}
\caption{Second-order Mach coefficient $\xi_2$ and its decomposition at $\Om=1.3$, $l=3$, $\eps=0.25$.
``Suppr.'' is the shear-suppressed surrogate, which sets $f'=0$ in the operator while retaining $f$ in
$\sig$; ``shear\,\%'' is $(\xi_2-\xi_2^{\rm suppr})/\xi_2$. The two plug columns replace the profile by a
uniform flow matched on peak velocity ($M_u=M$) or on mass flux ($M_u=\tfrac23 M$); the last column is
the relative error of the matched-peak plug flow against $\xi_2$. The matched-flux errors are $+80$,
$+19$, $-8$ and $+18\%$ in row order.}
\label{tab:xi2}
\end{table}

Two conclusions follow. First, the shear contribution is not a small correction to the convective one:
it reaches $38\%$ of the total on the plane wave and $57\%$ on the antisymmetric structural branch, and
it changes sign between branches, so a shear-free model misrepresents the second-order behaviour and the
full profile must be retained. Second, the error incurred by a \emph{genuine} plug flow is larger still.
Matching on centreline velocity overstates the flow effect by a factor of two to four; matching on mass
flux, which is the better choice, recovers most but not all of the difference. A uniform-flow model is
therefore adequate at zeroth order, adequate at first order in the convection-dominated limit, but
quantitatively wrong at second order. Including $\xi_2$ materially extends the range of $M$ over which
the asymptotic wavenumber is quantitative, as shown in Fig.~\ref{fig:meanflow}(a) and quantified in
Section~\ref{sec:largeM}. As noted in Remark~\ref{rem:incompress}, these are the exact $O(M^2)$
coefficients of the incompressible-parabolic model; mean-flow compressibility enters at the same order
and is not included.

\section{The critical layer}\label{sec:cl}

Because the critical layer is unlikely to be familiar to every reader of a structural-acoustics paper,
this section is written to be self-contained: what the singularity is, where it appears, what it does
to the solution, how causality resolves the resulting ambiguity, and how viscosity regularises it.

\subsection{Origin of the singularity}
The Pridmore--Brown equation~\eqref{eq:pb} carries $\sig(\eta)$ in the denominator of its
first-derivative coefficient. If there exists $\eta_c\in[0,1]$ with
\begin{equation}
\sig(\eta_c)=\Om-M\xi f(\eta_c)=0,
\label{eq:clcond}
\end{equation}
the equation has a regular singular point there and smooth-coefficient ODE theory does not apply.
Dividing~\eqref{eq:clcond} by $\xi$ and restoring dimensions shows what the condition means physically:
\begin{equation}
\frac{\omega}{k_x}=U(y_c),
\end{equation}
that is, the axial phase speed of the wave equals the local speed of the mean flow at that height. A
fluid particle at $y_c$ travels with the wave and sees a steady, rather than oscillatory, disturbance.
Wave and flow are in resonance there, and irreversible exchange of energy between them becomes
possible. This is the acoustic counterpart of the critical layer of hydrodynamic stability
theory~\cite{case,maslowe,drazinreid}, and its role in duct acoustics has been developed
in~\cite{brambleydarau,king2022}.

\subsection{Onset condition and the branch-resolved map}
Since $f$ attains its maximum value $f_{\max}=1$ at the centreline $\eta=1/2$, the first critical layer
to appear as $M$ is raised does so at the centreline. Setting $\sig(1/2)=0$,
\begin{equation}
\boxed{\;M\ge M_{\rm crit}=\frac{\Om}{\xi}=c_{\rm ph},\;}
\label{eq:Mcrit}
\end{equation}
where $c_{\rm ph}=\Om/\xi$ is the non-dimensional axial phase speed. The condition is therefore simply
that the flow becomes as fast as the wave. Its consequences differ sharply from branch to branch:

\begin{itemize}[leftmargin=1.6em,itemsep=2pt]
\item \textbf{Structural wave}, $\xi_0=\Om^{1/2}$: $\;M_{\rm crit}=\Om/\Om^{1/2}=\sqrt\Om$. Below
coincidence ($\Om<1$) this is \emph{subsonic}, and a critical layer is reachable at a physically
accessible Mach number. Above coincidence the structural wave is supersonic and $M_{\rm crit}>1$.
\item \textbf{Plane wave}, $\xi_0=\Om$: $\;M_{\rm crit}=1$, sonic flow.
\item \textbf{Cut-on branches}, $\xi_0<\Om$: $\;M_{\rm crit}>1$, supersonic flow.
\end{itemize}

\noindent The conclusion is worth isolating: \emph{within this waveguide, a critical layer is
accessible at subsonic flow only for the structural wave below coincidence}. This is a genuinely
structural--acoustic statement, since coincidence is a property of the plates and not of the fluid;
it has no counterpart in a rigid or constant-impedance duct. Fig.~\ref{fig:meanflow}(b,c) shows the
threshold and the corresponding touching of $\sig(\eta)$ at the centreline.

\subsection{Local (Frobenius) structure}
Write $\zeta=\eta-\eta_c$ and expand $\sig$ about the critical point,
\begin{equation}
\sig(\eta)\simeq\sig_c'\,\zeta,\qquad
\sig_c'=-M\xi f'(\eta_c)=-4M\xi(1-2\eta_c).
\end{equation}
To leading order~\eqref{eq:pb} becomes
\begin{equation}
\hat p''+\frac{2}{\zeta}\hat p'+\bigl(-l^2\xi_0^2+O(\zeta)\bigr)\hat p=0 .
\label{eq:local}
\end{equation}
Substituting a Frobenius series $\hat p=\zeta^\rho\sum_{n\ge0}a_n\zeta^n$ with $a_0\ne0$, the lowest
power is $\zeta^{\rho-2}$ and the indicial equation is
\begin{equation}
\rho(\rho-1)+2\rho=\rho(\rho+1)=0
\quad\Longrightarrow\quad
\rho_1=0,\ \rho_2=-1 .
\label{eq:indicial}
\end{equation}
The exponents differ by an integer, so the second solution may carry a logarithm. Carrying the
recurrence further (\ref{app:frobenius}) shows that the would-be $\rho=-1$ behaviour is
absorbed into the regular solution and that the logarithmic term first enters at order $\zeta^3$:
\begin{equation}
\hat p(\zeta)=C_1\bigl[1+O(\zeta^2)\bigr]+C_2\bigl[\zeta^3\ln\zeta+\cdots\bigr].
\label{eq:frob}
\end{equation}
The effective exponent pair for the pressure is thus $\{0,3\}$, and the essential point is that
\emph{the pressure remains bounded at the critical layer}. This is a distinctive feature of the
acoustic problem: the pressure eigenfunction, which is the object whose eigenvalue is being computed,
does not blow up.

\subsection{The velocity singularity}
The cross-stream velocity, however, does. From~\eqref{eq:vfromp},
\begin{equation}
v'=\frac{\ii\,\partial_yp'}{\sigma\rho}
\;\propto\;\frac{\hat p'}{\sig}\;\sim\;\frac{\hat p'(\eta_c)}{\sig_c'\,\zeta}\;\sim\;\zeta^{-1},
\label{eq:vsing}
\end{equation}
because $\hat p'$ is $O(1)$ at $\eta_c$ while $\sig$ vanishes linearly. The singularity is an
integrable inverse power, and it is carried by the velocity field (and, in a Green's-function
treatment, by the continuous spectrum~\cite{case,brambleydarau}) rather than by the discrete pressure
mode. This asymmetry between pressure and velocity is what ultimately explains the wall-mode result of
Section~\ref{sec:viscous}.

\subsection{Causality and the branch of the logarithm}
The term $\zeta^3\ln\zeta$ in~\eqref{eq:frob} is ambiguous until a branch of the logarithm is chosen,
and different choices give physically different answers: absorption, amplification, or neither. The
ambiguity is resolved by causality, that is by asking which solution is reached by a disturbance
switched on in the remote past.

The standard device is the Landau (vanishing-viscosity) prescription: give the frequency a small
positive imaginary part, $\omega\to\omega+\ii\epsilon$ with $\epsilon\to0^+$, which represents a wave
that was absent long ago and has grown to its present amplitude. Under this shift the critical-layer
location moves off the real axis,
\begin{equation}
\eta_c(\epsilon)=\eta_c+\ii\frac{\epsilon}{\omega M\xi f_c'}+O(\epsilon^2),
\qquad\text{with }\operatorname{Im}\eta_c>0,
\end{equation}
that is, into the upper half $\eta$-plane. The real integration path therefore passes \emph{below} the
branch point, and continuing from $\eta<\eta_c$ to $\eta>\eta_c$ along the real axis winds clockwise
around it, giving
\begin{equation}
\boxed{\;\Delta\bigl(\ln\zeta\bigr)=-\ii\pi\;}
\label{eq:phasejump}
\end{equation}
across the layer. The sign is the one that corresponds to net \emph{absorption} of wave energy by the
mean flow: the phase jump endows $\hat p$ with a small imaginary part above $\eta_c$, which through the
solvability condition at the next order shifts $\operatorname{Im}\xi$ by an amount proportional to
$|p_0(\eta_c)|^2$.

That last proportionality is the key to the whole question of how much the critical layer matters. For
the structural--acoustic wall modes studied here, the eigenfunction is driven by the plates and has its
pressure concentrated near $\eta=0$ and $\eta=1$, whereas the critical layer sits at
$\eta_c\approx\tfrac12$. One might therefore expect the coupling between the discrete mode and the layer
to be weak. Section~\ref{sec:viscous} tests this directly and finds the absorption of the antisymmetric
wall mode indeed negligible, but for a sharper and more robust reason than smallness, namely the
parity of the mode (Section~\ref{sec:parity}).

\subsection{Two critical layers, and the degeneracy at onset}\label{sec:twolayers}
The centreline is special twice over. \emph{At} onset, $M=M_{\rm crit}$, the layer sits exactly at
$\eta=\tfrac12$, where $f'(\tfrac12)=0$. The linear coefficient of the local expansion,
$\sig_c'=-M\xi f'(\eta_c)$, then vanishes, and $\sig$ has a \emph{double} zero:
$\sig(\tfrac12)=\sig'(\tfrac12)=0$ with $\sig''(\tfrac12)=-M\xi f''=8M\xi\ne0$. The indicial analysis of
Section~\ref{sec:cl} must be redone about this degenerate point; a Frobenius analysis about this double zero
gives the effective exponent pair $\{0,5\}$ in place of $\{0,3\}$, and the dominant balance in the viscous
inner layer becomes $\sig_c''\delta^2Z^2\sim\nu/\delta^2$, so that the inner-layer thickness
\emph{thickens} from $\Rey^{-1/3}$ to
\begin{equation}
\delta_{\rm onset}\sim\Rey^{-1/4}
\label{eq:degenerate}
\end{equation}
precisely at $M=M_{\rm crit}$.

\emph{Above} onset the single layer splits into a pair. Since $f(\eta_c)=\Om/(M\xi)<1$ for $M>M_{\rm crit}$
and $f$ is symmetric about the centreline with a single interior maximum, there are then \emph{two}
critical layers,
\begin{equation}
\eta_c^{\pm}=\tfrac12\pm\Delta,\qquad
f\bigl(\eta_c^{\pm}\bigr)=\frac{\Om}{M\xi},
\label{eq:pair}
\end{equation}
placed symmetrically about $\eta=\tfrac12$. At the two locations the mean shear has opposite sign,
$f'(\eta_c^{-})=-f'(\eta_c^{+})>0$, so $\sig_c'$ has opposite signs and, by the causal prescription of
Section~\ref{sec:cl}, the two layers carry phase jumps of \emph{opposite} sign. The net response of a
discrete mode to the pair therefore depends on how the mode samples the two heights, that is, on the
parity of $p_0$ about the centreline. This is the geometric fact behind the absorption result of
Section~\ref{sec:viscous}: for a mode with a mid-plane pressure node the pair straddles a zero and
contributes nothing at onset, whereas for a mode with a mid-plane antinode it does not.

\subsection{Viscous regularisation}
In a real fluid the singularity is not present; viscosity smooths it over a thin inner layer. The
relevant balance in the inner variable $Z=(\eta-\eta_c)/\delta$ is between the vanishing inertia term
and viscous diffusion,
\begin{equation}
\underbrace{\sig_c'\delta Z}_{\text{inertia}}\ \sim\ \underbrace{\frac{\nu}{\delta^2}}_{\text{viscosity}}
\quad\Longrightarrow\quad
\delta^3\sim\frac{\nu}{\sig_c'}
\quad\Longrightarrow\quad
\boxed{\;\delta\sim \Rey^{-1/3},\;}\qquad \Rey=\frac{ca}{\nu},
\label{eq:delta}
\end{equation}
which is the classical Tollmien scaling for a viscous critical layer. Retaining only the leading
viscous and inertial terms in the inner region reduces the linearised Navier--Stokes system to
\begin{equation}
\frac{\dd^3\tilde v}{\dd Z^3}-\ii Z\frac{\dd\tilde v}{\dd Z}=0,
\label{eq:airy}
\end{equation}
whose solutions are Airy functions of complex argument, $\Ai(\ii^{2/3}Z)$ and $\Bi(\ii^{2/3}Z)$. The
causal solution is the one that decays away from $Z=0$, namely $\Ai$; as $\Rey\to\infty$ and
$\delta\to0$ it matches onto the outer Frobenius solution~\eqref{eq:frob} and reproduces the same
phase jump $-\ii\pi$. The inviscid causal prescription is therefore the correct vanishing-viscosity
limit, not merely a convenient convention. Fig.~\ref{fig:viscous} shows all three features: the
approach of the computed phase jump to $\pi$ in magnitude as the viscous smoothing is reduced, the
$\Rey^{-1/3}$ scaling of the inner-layer thickness, and the regularised $(\eta-\eta_c)^{-1}$ velocity
singularity.

\begin{figure}[htb]
\centering
\includegraphics[width=0.98\textwidth]{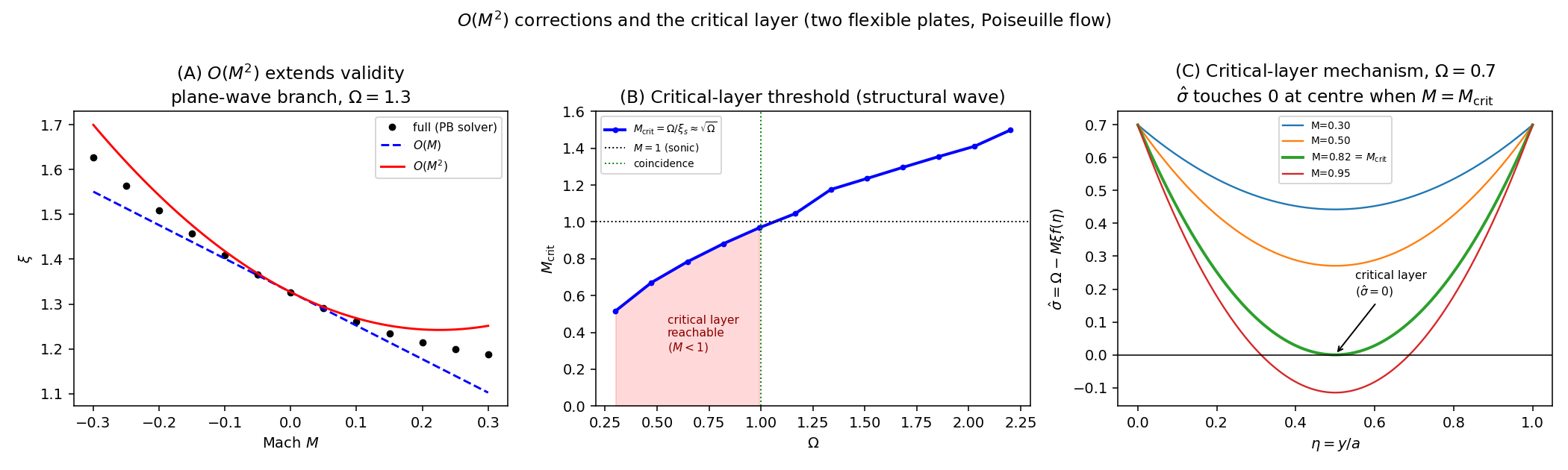}
\caption{Mean-flow effects. (a) Including the $O(M^2)$ term extends the range of Mach numbers over
which the asymptotic wavenumber tracks the exact solution. (b) The critical-layer threshold
$M_{\rm crit}=\Om/\xi_s\approx\sqrt\Om$ of~\eqref{eq:Mcrit}, reachable at subsonic $M$ only below
coincidence. (c) The Doppler-shifted frequency $\sig(\eta)$ touching zero at the centreline exactly
when $M=M_{\rm crit}$.}
\label{fig:meanflow}
\end{figure}

\begin{figure}[htb]
\centering
\includegraphics[width=0.98\textwidth]{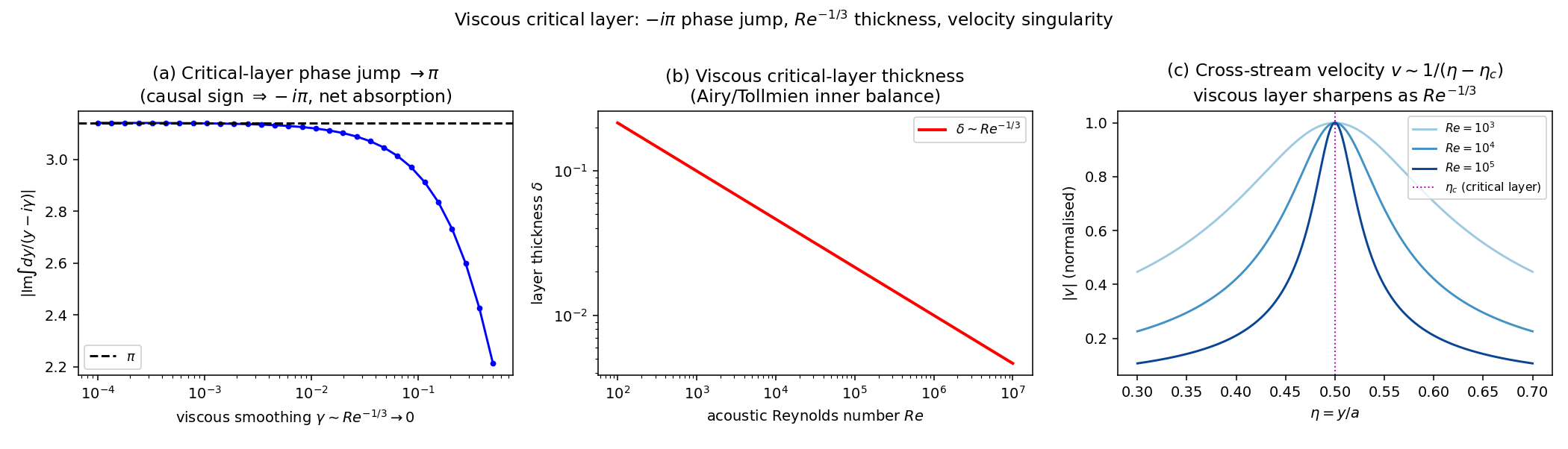}
\caption{Structure of the viscous critical layer. (a) The causal phase jump tends to $\pi$ in
magnitude as the viscous smoothing vanishes, i.e.\ $-\ii\pi$ as in~\eqref{eq:phasejump},
corresponding to net absorption. (b) Inner-layer thickness following $\delta\sim \Rey^{-1/3}$,
equation~\eqref{eq:delta}. (c) The cross-stream velocity singularity $v\sim(\eta-\eta_c)^{-1}$
of~\eqref{eq:vsing}, regularised over a layer that sharpens as $\Rey^{-1/3}$.}
\label{fig:viscous}
\end{figure}

\section{Viscous solution above the critical Mach number}\label{sec:viscous}

\subsection{Motivation}
Above $M_{\rm crit}$ the inviscid operator is singular and the Mach expansion of
Section~\ref{sec:lowmach} has no domain. The question of physical interest is not whether the expansion
survives (it does not) but whether the \emph{discrete wall modes} survive: does the critical layer
absorb them, as it can absorb the surface modes of an impedance-lined duct~\cite{brambleydarau}, or do
they pass through unaffected? The only way to answer this cleanly is to retain viscosity, which
regularises the operator, and then to examine the limit $\Rey\to\infty$.

\subsection{The viscous eigenvalue problem}
The linearised compressible Navier--Stokes equations, viscous in the momentum balance and inviscid in
the energy balance, are, in non-dimensional form with $\nabla\cdot\hat u=\ii\xi\hat u+l^{-1}\hat v_\eta$,
\begin{align}
\sig\hat u-Mf'\hat v&=-\ii\xi\hat p
+\frac{1}{\Rey}\Bigl[\hat u_{\eta\eta}-\xi^2\hat u+\frac{\ii\xi}{3}\nabla\cdot\hat u\Bigr],
\label{eq:lnsu}\\
\sig\hat v&=-\frac{1}{l}\hat p_\eta
+\frac{1}{\Rey}\Bigl[\hat v_{\eta\eta}-\xi^2\hat v+\frac{1}{3l}\partial_\eta(\nabla\cdot\hat u)\Bigr],
\label{eq:lnsv}\\
\sig\hat p+\frac{\gamma-1}{\gamma}\Bigl(\ii\xi\hat u+\frac1l\hat v_\eta\Bigr)
&=\frac{\gamma-1}{\gamma\Rey\mathit{Pr}}\bigl(\hat p_{\eta\eta}-\xi^2\hat p\bigr).
\label{eq:lnsp}
\end{align}
For the conclusions drawn below the energy equation is treated as inviscid, that is the right-hand side
of~\eqref{eq:lnsp} is dropped. This is justified because the viscous corrections to the pressure field
are $O(\Rey^{-1})$, whereas the critical-layer structure is governed by the momentum balance
at $O(\Rey^{-1/3})$ and the wall damping at $O(\Rey^{-1/2})$; the neglected terms are subdominant to
both.

\subsection{Viscous boundary conditions}
With viscosity retained the velocity is no longer slaved to $\hat p_\eta$, and no-slip must be imposed
separately, $\hat u(0)=\hat u(1)=0$. The pressure is then recovered from the continuity
relation~\eqref{eq:lnsp} rather than from the $y$-momentum equation:
\begin{equation}
\hat p=\frac{\xi\hat u}{\sig}+\frac{\ii\hat v_\eta}{l\sig}
\quad\overset{\hat u=0,\ \sig=\Om}{\Longrightarrow}\quad
\hat p(0)=\frac{\ii\hat v_\eta(0)}{l\Om}.
\end{equation}
Substituting this into the non-dimensional admittance relation
$\hat v(0)=-\ii\eps\hat p(0)/(l\Om S_1)$ converts the plate condition into a Robin condition on the
normal velocity,
\begin{equation}
\boxed{\;\hat v(0)-\beta_1\hat v_\eta(0)=0,\qquad
\hat v(1)+\beta_2\hat v_\eta(1)=0,\qquad
\beta_j=\frac{\eps}{l^2\Om^2S_j}.\;}
\label{eq:robinv}
\end{equation}
Together with no-slip these supply the four conditions the fourth-order viscous system requires. As
$\Rey\to\infty$ the viscous $y$-momentum balance returns to its inviscid form
$\hat v=\hat p_\eta/(l\Om\sig)$, and substituting this into~\eqref{eq:robinv} recovers the inviscid
pressure conditions~\eqref{eq:robinp} exactly; the reduction is verified symbolically and set out in
\ref{app:bc}. The two formulations are therefore consistent, and any difference between the
inviscid and viscous eigenvalues is a genuine viscous effect and not an artefact of the wall condition.

\subsection{Discretisation}
The system~\eqref{eq:lnsu}--\eqref{eq:lnsp} with its four boundary conditions is discretised by
Chebyshev collocation on the Gauss--Lobatto points $\eta_k=\bigl(1-\cos(k\pi/N)\bigr)/2$,
$k=0,\dots,N$, with the standard differentiation matrices $D^{(1)}$ and $D^{(2)}$. Stacking the
unknowns $(\hat u_k,\hat v_k,\hat p_k)$ into a vector $\mathbf q$ gives
\begin{equation}
\mathsf A(\xi)\,\mathbf q=\mathbf 0,
\end{equation}
in which $\mathsf A$ depends nonlinearly on $\xi$ both through $\sig=\Om-M\xi f$ and through
$S_j(\xi)$ in the boundary conditions. The dispersion relation is $\det\mathsf A(\xi)=0$, solved by
Newton iteration continued from the quiescent eigenvalue $\xi_0$. Resolution of the critical layer
requires $N$ large enough that several collocation points fall within the inner layer of thickness
$\Rey^{-1/3}$; $N=200$--$400$ was used, with convergence checked by doubling $N$.

\subsection{Results}
Four findings emerge, shown in Fig.~\ref{fig:viscglobal}. Unless stated otherwise they are for the
antisymmetric structural branch continued from $\xi_0=0.858$ at $\Om=0.7$, $l=3$, $\eps=0.25$, whose
critical layer is reached at $M_{\rm crit}\approx0.82$; the symmetric branch at the same frequency is
treated separately in Section~\ref{sec:parity}.

\paragraph{(1) The eigenvalue is regular through $M_{\rm crit}$.} $\operatorname{Re}\xi$ varies
smoothly as $M$ passes through $M_{\rm crit}$, with no jump, cusp or turning point
(Fig.~\ref{fig:viscglobal}a, $\Rey=3000$). Viscosity smooths the singularity over the inner layer of
thickness $\Rey^{-1/3}$, and the discrete mode does not notice the transition.

\paragraph{(2) The decay rate scales as $\Rey^{-1/2}$.} The attenuation $\operatorname{Im}\xi$ obeys a
clean $\Rey^{-1/2}$ law both below ($M=0.70$) and above ($M=0.92$) the threshold
(Fig.~\ref{fig:viscglobal}b). This exponent identifies the mechanism unambiguously as the wall Stokes
layers, whose thickness is $\delta_S\sim(\nu/\omega)^{1/2}\sim \Rey^{-1/2}$ in units of $a$: the
viscous stress at the wall is $\mu\partial_yu'\sim\mu u'_{\rm wall}/\delta_S$, and balancing the work
done against the wave energy, which scales as $|p_{\rm wall}|^2$, gives $\operatorname{Im}\xi\sim
\Rey^{-1/2}$. Critically, this damping \emph{vanishes} in the inviscid limit.

\paragraph{(3) The antisymmetric wall mode is not absorbed.} Fitting
\begin{equation}
\operatorname{Im}\xi=A\,\Rey^{-1/2}+B
\label{eq:AB}
\end{equation}
separates the wall-Stokes contribution, which disappears as $\Rey\to\infty$, from the
$\Rey$-\emph{independent} part $B$. The latter is the candidate critical-layer absorption, since the
critical layer is an inviscid feature whose effect must survive the inviscid limit. For the
antisymmetric branch the fit returns $|B|\approx1\times10^{-5}$ both below ($M=0.70$) and above
($M=0.92$) onset (Fig.~\ref{fig:viscglobal}c), at the resolution floor of the fit. Three caveats bound
this number. It is not zero in an absolute sense: at the largest Reynolds number computed, $\Rey=10^4$,
the total attenuation is $|\operatorname{Im}\xi|\approx2\times10^{-4}$, so $|B|$ is a few per cent of the
measured damping, and it \emph{changes sign} between the two Mach numbers, evidence that it is at the
resolution limit rather than resolved. A two-term fit may, moreover, absorb an intermediate $\Rey^{-1/3}$
scaling into $A$ and $B$. And the limit $\Rey\to\infty$ is not physically attainable: the hydrodynamic
Reynolds number $\Rey_h=M\Rey$ reaches the plane-Poiseuille Tollmien--Schlichting threshold
$\Rey_h\approx1.15\times10^4$ at $\Rey\approx1.25\times10^4$ for $M=0.92$, so~\eqref{eq:AB} is a trend
over a finite window, not a limit. Within these limits, the interior critical layer does not measurably
absorb the antisymmetric wall mode.

\paragraph{(4) Convective stability.} The inviscid antisymmetric eigenvalues remain real throughout the
range studied, and the viscous decay rate satisfies $\operatorname{Im}\xi>0$, so these waves decay
downstream: in the classification of Benjamin~\cite{benjamin} and Landahl~\cite{landahl} they behave as
damped class-B disturbances, the damping supplied by the wall Stokes layers and vanishing inviscidly. No
flow-induced flutter of the antisymmetric wall modes is found up to the transonic limit of the model.
The symmetric branch behaves differently above onset, as Sections~\ref{sec:parity} and~\ref{sec:instab} show.

\begin{figure}[htb]
\centering
\includegraphics[width=0.98\textwidth]{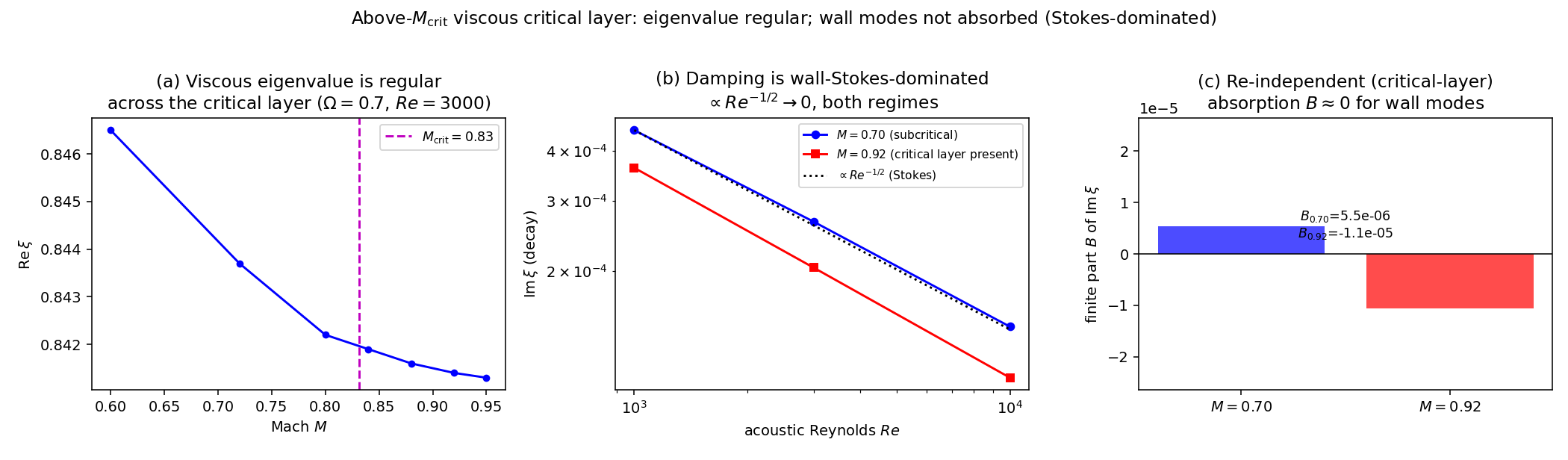}
\caption{Viscous solution through and above $M_{\rm crit}$ for the antisymmetric structural branch
($\Om=0.7$, $l=3$, $\eps=0.25$). (a) The eigenvalue is regular through the critical layer ($\Rey=3000$).
(b) The decay rate follows $\operatorname{Im}\xi\propto\Rey^{-1/2}$ both below ($M=0.70$) and above
($M=0.92$) $M_{\rm crit}$, identifying it as wall-Stokes-layer damping that vanishes in the inviscid
limit. (c) The $\Rey$-independent residual $B$ is at the resolution floor, $|B|\approx10^{-5}$, and
changes sign between the two Mach numbers: the antisymmetric wall mode is not measurably absorbed.}
\label{fig:viscglobal}
\end{figure}

\subsection{Why the absorption vanishes: parity, not smallness}\label{sec:parity}
The absorption implied by the phase jump is proportional to $|p_0(\eta_c)|^2$, and it is natural to
attribute the null of item~(3) to smallness: wall-driven modes concentrate their pressure at the plates,
while the layer sits near $\eta=\tfrac12$, so the overlap is small. At the parameter point computed that
argument is weaker than it looks, and the true reason is both simpler and stronger.

\paragraph{The mode is not strongly wall-concentrated.} The symmetric structural mode at this frequency
has $p_0(\tfrac12)/p_0(0)=0.746$: three quarters of the wall amplitude survives at the centreline. For
that branch the overlap integral is emphatically not small, and a smallness argument would predict a
resolvable absorption.

\paragraph{The mode computed has an exact node.} The branch continued in Fig.~\ref{fig:viscglobal} is
the antisymmetric structural wave, for which the mid-plane is an exact pressure node,
$p_0(\tfrac12)\equiv0$, by the factorisation~\eqref{eq:symanti}. At onset, where $\eta_c=\tfrac12$, the
overlap integral therefore vanishes \emph{identically by symmetry}, not approximately by localisation.
Above onset the layer splits into the symmetric pair~\eqref{eq:pair}, at which $p_0$ no longer vanishes;
but for the antisymmetric mode $p_0(\eta_c^{-})=-p_0(\eta_c^{+})$ while the two phase jumps are opposite
in sign (Section~\ref{sec:twolayers}), so the contributions cancel. Either way the net absorption is
zero for reasons of parity.

\paragraph{The decisive test, and its outcome.} These observations turn the numerical null into a
symmetry statement and identify the two computations that decide whether it generalises. Both were
carried out with a \emph{parity-restricted} viscous solver: for identical plates the discretised
operator commutes exactly with the mirror reflection $\eta\to1-\eta$ (the commutation residual is
$\sim10^{-16}$), so each symmetry family can be solved in isolation. This is essential here, because
above onset the symmetric and antisymmetric structural eigenvalues become nearly degenerate
($\xi\approx0.841$ for both at $M=0.92$) and a naive tracker hops between them; the parity restriction
keeps them apart.
\begin{enumerate}[label=(\alph*),leftmargin=2.2em,itemsep=3pt]
\item \emph{Symmetric branch.} Repeating the traverse on the symmetric structural branch
($\xi_0=0.88169$, $M_{\rm crit}=0.794$), which has no parity node, returns a \emph{resolvable}
$\Rey$-independent residual $|B|\approx6.4\times10^{-4}$ at $M=0.92$, about sixty times the
antisymmetric floor. It is $N$-converged (identical to five significant figures for $N=110$, $150$ and
$190$) and unchanged whether or not an intermediate $\Rey^{-1/3}$ term is included in the fit, and
$|\operatorname{Im}\xi|$ grows towards $|B|$ as $\Rey$ increases, the signature of a genuine inviscid
residual rather than a numerical artefact. Below onset ($M=0.70$) the same branch sits at the floor.
The null is therefore \emph{not} a geometric property of the waveguide; it is specific to the
antisymmetric parity.
\item \emph{Broken mirror symmetry.} Making the plates unequal, $r_D\ne1$, destroys the exact
factorisation~\eqref{eq:factor}: the antisymmetric-derived mode acquires a nonzero centreline pressure
and its residual lifts off the floor, from $|B|\approx1\times10^{-5}$ at $r_D=1$ to $\approx3\times10^{-4}$
once $|r_D-1|\gtrsim0.005$, tracking the induced $p_0(\tfrac12)$. The lift is steep and then saturates,
reflecting the sharp avoided crossing of the two nearly-degenerate branches; it is not the slow linear
growth a well-separated pair would show.
\end{enumerate}
Figure~\ref{fig:parity} collects the three panels: the parity of the two eigenfunctions, the
extrapolation of $\operatorname{Im}\xi$ to its inviscid residual on each branch, and the growth of $|B|$
as the mirror symmetry is broken.

The conclusion is sharper than ``wall modes are not absorbed.'' For the antisymmetric structural mode
the interior critical layer produces no measurable absorption, and this is protected by the mode's
parity rather than by the weakness of the singularity; the symmetric mode, which shares the frequency
but not the node, enjoys no such protection. The resolvable symmetric-branch residual carries the sign
of amplification rather than of decay, so the critical layer feeds energy into the symmetric wall mode.
Section~\ref{sec:instab} establishes that this is a genuine convective instability of that mode.

\begin{figure}[htb]
\centering
\includegraphics[width=0.98\textwidth]{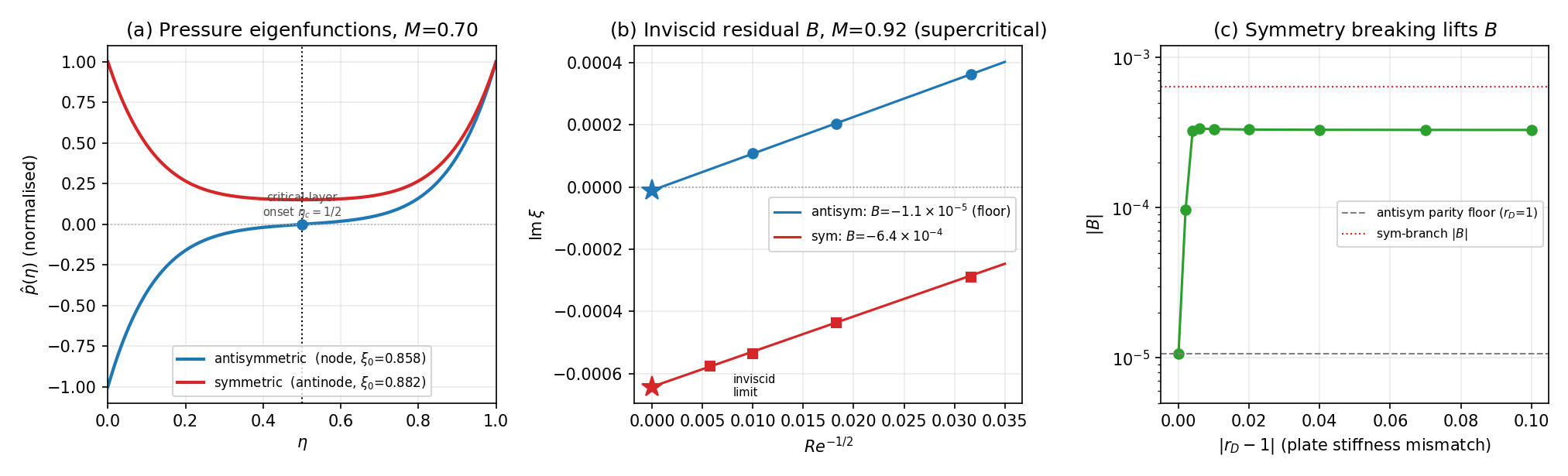}
\caption{The parity test of the absorption null ($\Om=0.7$, $l=3$, $\eps=0.25$), computed with the
parity-restricted viscous solver. (a) Pressure eigenfunctions at $M=0.70$: the antisymmetric branch has
an exact node at the critical-layer onset location $\eta=\tfrac12$, the symmetric branch an antinode.
(b) The $\Rey$-independent residual $B$ at $M=0.92$: the antisymmetric branch extrapolates to the
resolution floor ($|B|\approx10^{-5}$), while the symmetric branch extrapolates to a resolvable
$B\approx-6.4\times10^{-4}$. (c) Breaking the mirror symmetry ($r_D\ne1$) lifts $|B|$ off the
antisymmetric floor towards the symmetric-branch value and saturates, confirming that the null is a
parity effect rather than a geometric one.}
\label{fig:parity}
\end{figure}

\subsection{A convective instability of the symmetric wall mode}\label{sec:instab}
Section~\ref{sec:parity} left one question open: the resolvable symmetric-branch residual $B$ carries the
sign of amplification, not of decay. This section resolves it. Four computations, all with the
parity-restricted solver, establish that the symmetric structural wall mode is convectively unstable
above $M_{\rm crit}$, that the instability survives the inviscid limit, that its onset is set by the
critical layer, and that it is distinct from the Tollmien--Schlichting and Kelvin--Helmholtz mechanisms.

\paragraph{The sign, settled two ways.} The wall Stokes term is dissipative, so its sign in the solver
is, by definition, decay; in the fit $\operatorname{Im}\xi=A\Rey^{-1/2}+B$ this term is $A$, positive on
both branches. On the symmetric branch the Reynolds-independent residual $B=-6.5\times10^{-4}$ carries
the opposite sign and dominates at every $\Rey$, so the net contribution is anti-dissipative. A
convention-independent temporal calculation confirms it: fixing the real axial wavenumber and solving
for the complex frequency $\Omega$, the viscously damped antisymmetric branch returns
$\operatorname{Im}\Omega<0$ with a magnitude that falls as $\Rey^{-1/2}$ and vanishes as $\Rey\to\infty$,
whereas the symmetric branch returns $\operatorname{Im}\Omega>0$, the opposite sign, with a magnitude
that grows with $\Rey$ toward a finite limit $\omega_i^\infty\approx1.06\times10^{-3}$
(Fig.~\ref{fig:instab}b). Growth that strengthens toward the inviscid limit identifies an inviscid
critical-layer instability, not a viscous or numerical artefact.

\paragraph{Onset is set by the critical layer.} Tracking the temporal growth rate of the symmetric
branch in $M$ at fixed frequency (Fig.~\ref{fig:instab}a) shows it rising monotonically from negative,
where the mode is viscously damped, and crossing zero into growth as $M$ increases past the
critical-layer threshold. Repeating this at several frequencies gives the neutral curve $M_n(\Om)$ of
Fig.~\ref{fig:stab}a, which lies just above the critical-layer onset $M_{\rm crit}(\Om)=\Om/\xi_0$ and
follows it, from $M_n=0.759$ at $\Om=0.55$ to $M_n=0.904$ at $\Om=0.75$. Below $M_{\rm crit}$ there is no
critical layer and the mode is damped; above it the layer feeds energy to the mode. The small offset
$M_n-M_{\rm crit}$ is the residual wall-Stokes damping the forcing must overcome, and it narrows as
$\Rey$ increases.

\paragraph{The instability is convective.} At the unstable point the group velocity is
$c_g=\dd\Om/\dd\xi=1.65$, real and of order unity, and it does not vanish anywhere on the unstable
branch, so the branch supports no zero-group-velocity saddle. The leading and trailing edges of the wave
packet travel at $v_+=1.75$ and $v_-=1.55$, both positive, so the entire packet is swept downstream while
it amplifies. In the Briggs--Bers sense the instability is therefore convective, not absolute: a
disturbance grows as it is carried along the channel but decays at any fixed station, as for
Tollmien--Schlichting waves in an unbounded shear flow.

\paragraph{A compliant-wall mode, not Tollmien--Schlichting or Kelvin--Helmholtz.} Three facts separate
it from the two classical channel-flow instabilities. First, the growth rate is proportional to the wall
compliance, $\operatorname{Im}\Omega\approx2.9\times10^{-3}\,\eps$ through the origin
(Fig.~\ref{fig:stab}b), so it vanishes as the plates are made rigid, whereas a Tollmien--Schlichting wave
exists for rigid walls and is independent of $\eps$. Second, the instability is present at hydrodynamic
Reynolds numbers far below the Tollmien--Schlichting threshold: it is resolved already at $\Rey=10^3$,
where $\Rey_h=M\Rey=920$, while plane Poiseuille flow is linearly stable to Tollmien--Schlichting waves
below $\Rey_h\approx1.15\times10^4$. Third, the Kelvin--Helmholtz mechanism is absent by construction: it
requires a slip velocity at the wall, and the no-slip Poiseuille profile annihilates the convective
(Ingard--Myers) term exactly (Remark~\ref{rem:IM}), so there is no vortex sheet to roll up. The unstable
mode is a discrete compressible structural--acoustic wall mode of the compliant channel, destabilised by
the interior critical layer.

\paragraph{Consequence.} The statement of Sections~\ref{sec:viscous} and~\ref{sec:largeM} that the system
is convectively stable holds for the antisymmetric branch only. The symmetric structural wall mode is
convectively unstable above $M_{\rm crit}$, and this critical-layer instability, absent for a rigid wall
and for a slip profile, is the principal physical result of the paper.

\begin{figure}[htb]
\centering
\includegraphics[width=0.98\textwidth]{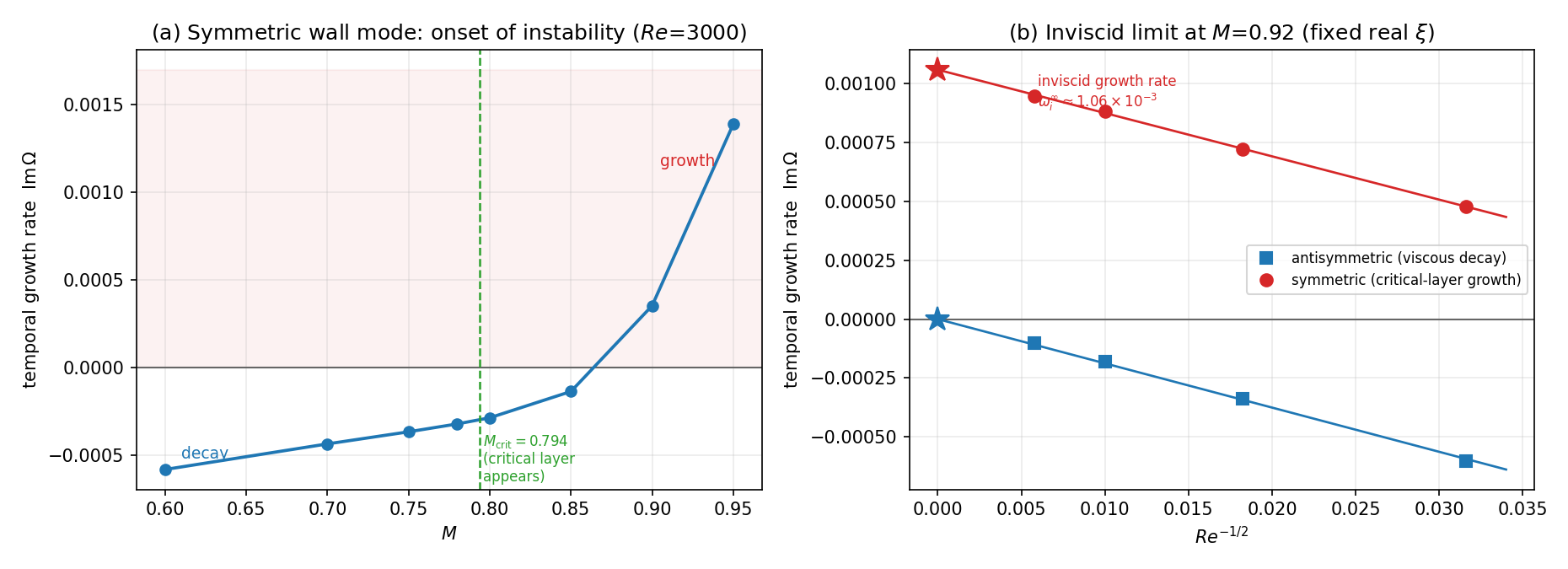}
\caption{The convective critical-layer instability of the symmetric wall mode ($\Om=0.7$, $l=3$,
$\eps=0.25$), parity-restricted solver. (a) Temporal growth rate $\operatorname{Im}\Omega$ against Mach
number at $\Rey=3000$: negative (viscously damped) below the critical-layer threshold $M_{\rm crit}=0.794$
and positive (growth) above it. (b) At $M=0.92$, the temporal growth rate against $\Rey^{-1/2}$ for the
two branches at fixed real wavenumber: the antisymmetric branch is viscously damped and extrapolates to
zero, while the symmetric branch grows to a finite inviscid limit $\omega_i^\infty\approx1.06\times10^{-3}$.}
\label{fig:instab}
\end{figure}

\begin{figure}[htb]
\centering
\includegraphics[width=0.98\textwidth]{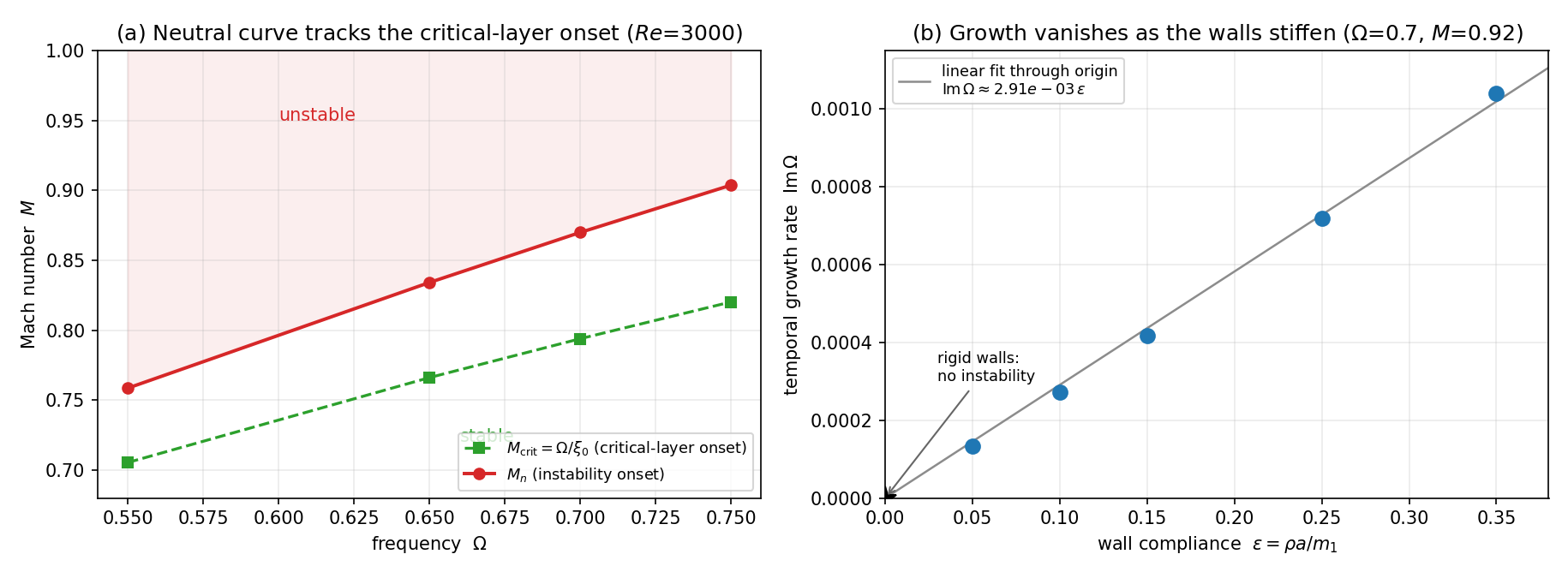}
\caption{(a) Neutral curve $M_n(\Om)$ of the symmetric-mode instability (red) and the critical-layer onset
$M_{\rm crit}(\Om)=\Om/\xi_0$ (green), $\Rey=3000$: the instability sets in just above the critical layer
and follows it. (b) The growth rate at $\Om=0.7$, $M=0.92$ is proportional to the wall compliance $\eps$
and passes through the origin, so the instability vanishes for rigid walls, which distinguishes it from
the Tollmien--Schlichting mechanism.}
\label{fig:stab}
\end{figure}

\section{Numerical methods and validation}\label{sec:val}

The analytical results are supported by three numerical schemes that share no code and rest on
different formulations of the problem.

\paragraph{Method 1: Runge--Kutta shooting.} Equation~\eqref{eq:pb} is integrated from $\eta=0$ to
$\eta=1$ with a fourth-order Runge--Kutta scheme, starting from the normalisation $\hat p(0)=1$ and the
boundary condition $\hat p'(0)=-\eps\hat p(0)/S_1$. The residual at the far wall,
$R(\xi)=\hat p'(1)-(\eps/S_2)\hat p(1)$, is driven to zero by Newton iteration on $\xi$. Step-size
halving is used to confirm convergence, and near $M_{\rm crit}$ the integration path is indented into
the complex $\eta$-plane according to the Landau prescription of Section~\ref{sec:cl}.

\paragraph{Method 2: Chebyshev spectral collocation.} The pressure is expanded in $N$ Chebyshev
polynomials on $[0,1]$ and~\eqref{eq:pb} with~\eqref{eq:robinp} is enforced at the collocation points,
converting the problem into a matrix eigenvalue problem solved by the QZ algorithm. The physical
eigenvalue is identified by continuation from the quiescent solution.

\paragraph{Method 3: primitive-variable formulation.} The system is written directly in the three
fields $(\hat u,\hat v,\hat p)$ of the linearised Euler equations~\eqref{eq:pbu}--\eqref{eq:pbc},
without eliminating $\hat u$ and $\hat v$ a priori, and the resulting first-order system is solved by
Chebyshev collocation. Reducing the same system symbolically to a single equation in $\hat p$
reproduces the Pridmore--Brown operator~\eqref{eq:pb} exactly; it was this reduction that fixed the sign
of the shear term.

\paragraph{Agreement.} The three methods agree on every branch to approximately $6\times10^{-9}$ for
Mach numbers up to $M=0.2$, a level consistent with double-precision round-off in the shooting
integration. At $M=0$ all three reduce to the closed-form quiescent relation~\eqref{eq:disp}, whose
algebra was itself verified with a symbolic-algebra package. As a direct test of the asymptotics, the
closed-form $\xi_1$ of~\eqref{eq:xi1} matches numerically differentiated $\dd\xi/\dd M$ to
$0.1$--$0.5\%$ (Table~\ref{tab:xi1}). Finally, the symmetric branch of the quiescent problem reproduces
the single fluid-loaded-plate result of~\cite{sarkarsonti} under the substitution
$(a,\eps)\to(a/2,\eps/2)$ to four decimal places at every frequency tested
(Fig.~\ref{fig:validation}).

\begin{figure}[htb]
\centering
\includegraphics[width=0.9\textwidth]{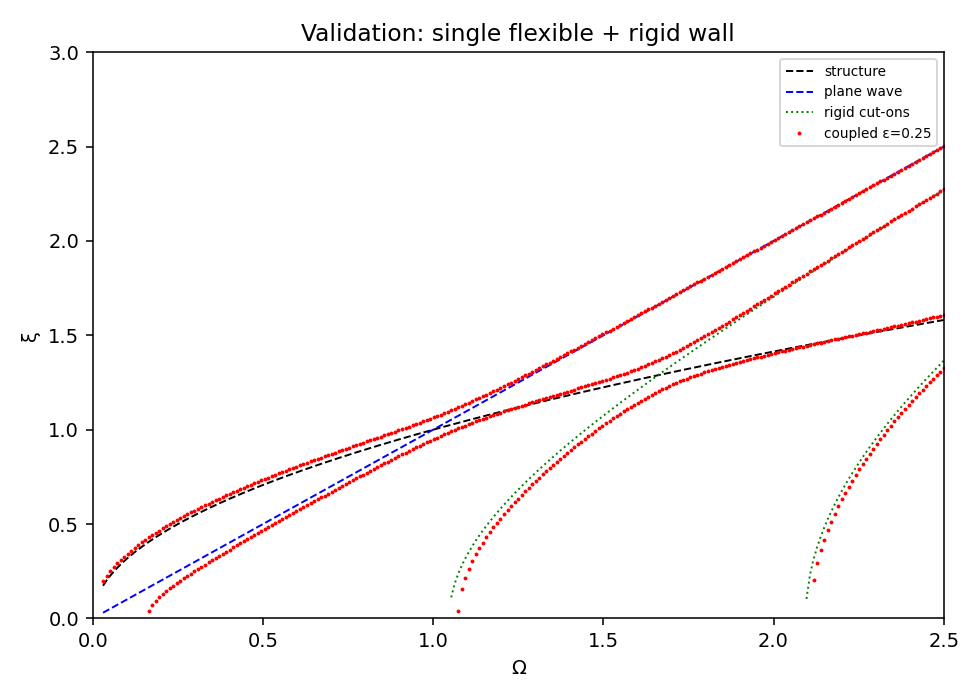}
\caption{Validation against the single-flexible-plate problem. The present solver, run in the
rigid-cover limit $\alpha_2\to\infty$, reproduces the coupled dispersion curves of~\cite{sarkarsonti}:
the coupled branches sit just off the in-vacuo flexural, plane-wave and rigid cut-on parents, with gaps
at coincidence and at each cut-on crossing.}
\label{fig:validation}
\end{figure}

\section{Finite Mach numbers: full dispersion, validity and stability}\label{sec:largeM}

\subsection{Full nonlinear dispersion}
The corrections of Sections~\ref{sec:lowmach}--\ref{sec:om2} are asymptotic in $M$, and to see what
happens at larger Mach numbers the dispersion must be read directly off~\eqref{eq:pb} without
expansion. Fig.~\ref{fig:Mmap} does this for $M=0$, $0.3$ and $0.5$. An increasing downstream flow
pushes the whole diagram towards lower $\xi$, in agreement with Proposition~\ref{prop:sign}; the shift
is largest for the plane-wave-like branch and smallest for the structural branches, in agreement with
the $I_f$ weighting of Section~\ref{sec:lowmach}; and the interleaved cut-on ladder structure survives
intact at every $M$ studied.

One limitation should be stated plainly. A strictly parabolic profile is the incompressible Poiseuille
solution and is an idealisation at finite Mach number, so the model is quantitative only for $M\ll1$.
The finite-$M$ results are offered as an extension of the model, useful for identifying trends and for
locating the critical layer, and not as a prediction for a genuinely compressible mean flow, whose
profile would itself deform.

\subsection{Validity of the asymptotic expansions}
How far the closed forms can be trusted depends strongly on the branch
(Fig.~\ref{fig:valid}a,b). On the flow-sensitive plane-wave branch both the $O(M)$ and $O(M^2)$
expansions pass a five-percent relative error already at $M\approx0.25$--$0.3$; being asymptotic rather
than convergent, the series ultimately diverges, and beyond $M\sim0.3$ adding terms makes matters
worse. On a flow-insensitive structural branch the same expansions stay within about three percent out
to considerably larger $M$, because the structural wave speed is set mainly by plate stiffness rather
than by the flow. Cut-on branches are intermediate, with the effective expansion parameter being
$M\xi_0/\Om$ rather than $M$ itself, that is, the ratio of the flow speed to the axial phase speed,
which is exactly $M/M_{\rm crit}$. The practical rule is therefore: use the closed forms for
$M\lesssim0.3\,M_{\rm crit}$ and the full solver beyond.

\subsection{Convective stability}
The convective stability of the discrete wall modes was checked with the viscous solver of
Section~\ref{sec:viscous}. On the antisymmetric branch the inviscid eigenvalues remain real and the
viscous decay rate satisfies $\operatorname{Im}\xi>0$ throughout the range studied
(Fig.~\ref{fig:valid}c), so the mean flow does not destabilise those modes; the damping present
originates in the wall Stokes layers and vanishes as the viscosity is removed. The symmetric structural
branch is different: above $M_{\rm crit}$ it is convectively unstable, as Section~\ref{sec:instab}
establishes. Both statements concern the discrete spectrum only and say nothing about the continuous
spectrum, which is not treated here.

\begin{figure}[htb]
\centering
\includegraphics[width=0.72\textwidth]{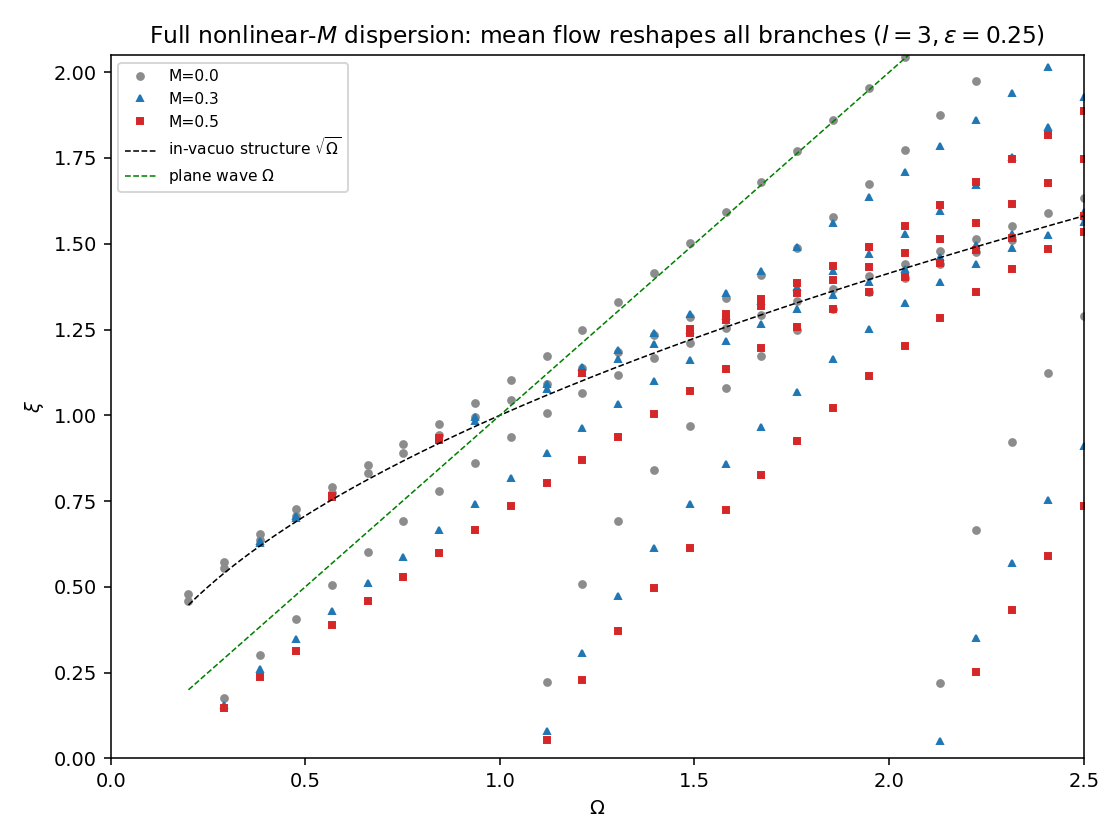}
\caption{Full nonlinear-in-$M$ dispersion ($l=3$, $\eps=0.25$). The mean flow shifts every coupled
branch to lower $\xi$ with increasing downstream Mach number, most strongly near the plane wave and
least on the structural branches; the interleaved cut-on ladder structure is preserved.}
\label{fig:Mmap}
\end{figure}

\begin{figure}[htb]
\centering
\includegraphics[width=0.99\textwidth]{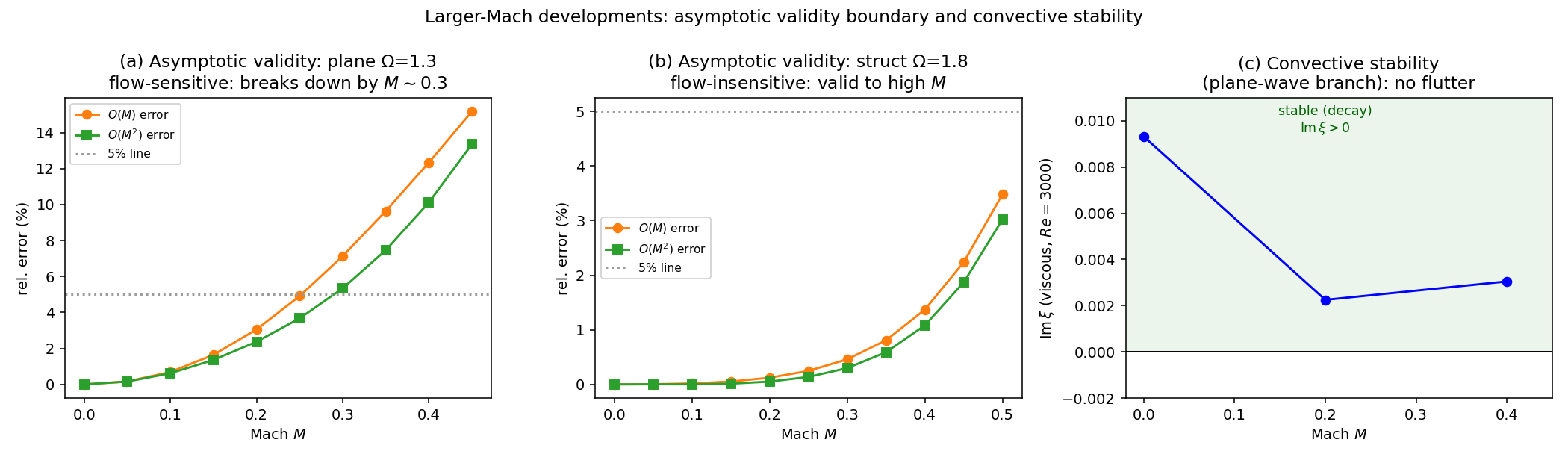}
\caption{(a,b) Branch-dependent validity of the $O(M)$ and $O(M^2)$ asymptotics: the flow-sensitive
plane-wave branch breaks down by $M\sim0.3$, while a flow-insensitive structural branch remains
accurate to considerably larger $M$. (c) The antisymmetric viscous decay rate satisfies
$\operatorname{Im}\xi>0$ throughout; the symmetric branch is treated in Section~\ref{sec:instab}.}
\label{fig:valid}
\end{figure}

\begin{figure}[htb]
\centering
\includegraphics[width=0.9\textwidth]{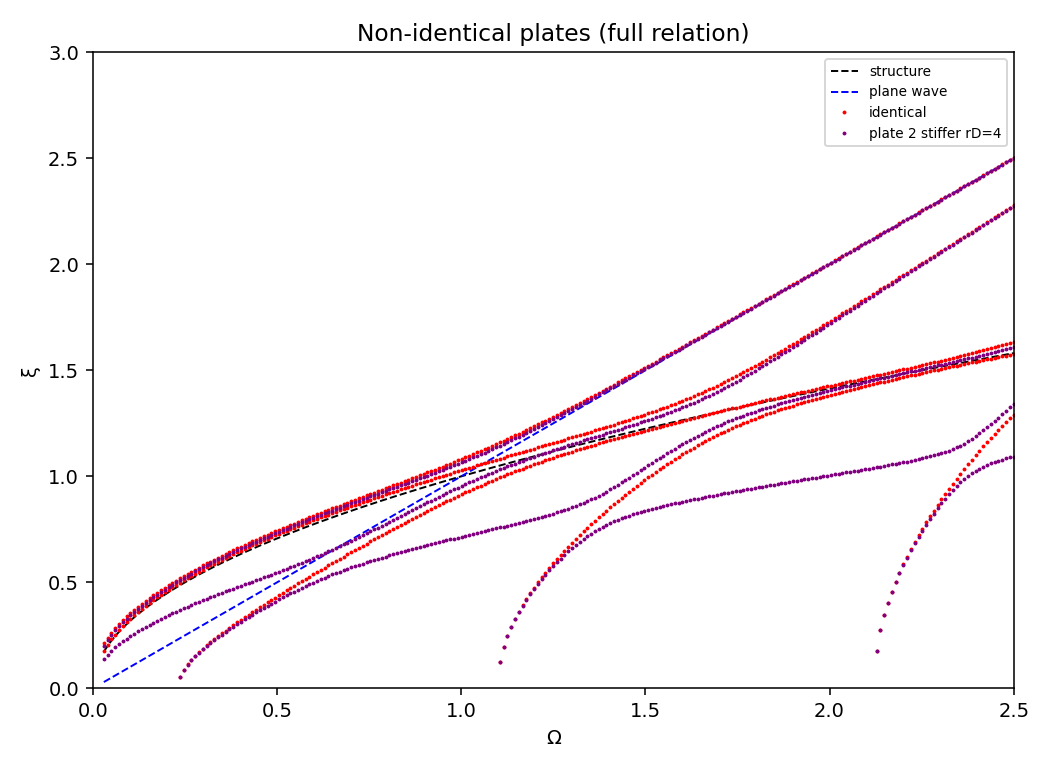}
\caption{Non-identical plates. Comparison of the identical case $r_D=1$ with a stiffer second plate
$r_D=4$, computed from the full relation~\eqref{eq:disptan}. The symmetry decoupling is lost and the
two structural waves veer rather than cross, with the minimum separation given
by~\eqref{eq:veer}.}
\label{fig:nonident}
\end{figure}

\section{Discussion}\label{sec:disc}

\subsection{What the mean flow does}
The mean flow enters the problem in three separable ways, and the closed form~\eqref{eq:xi1} makes the
separation explicit. Doppler convection, weighted by $I_f=\int fp_0^2$, dominates for modes that
sample the fast core; boundary shear, weighted by the squared wall pressures, dominates for modes
concentrated at the plates; and bulk shear, weighted by $I_0$, acts uniformly as a distributed added
mass. Because the three carry different weights on different branches, the flow sensitivity of the
spectrum is strongly non-uniform: at $\Om=1.3$, $l=3$, $\eps=0.25$ the plane wave has
$|\xi_1|\approx0.75$ while the structural branches have $|\xi_1|\approx0.11$--$0.13$, a factor of six.
The practical consequence for a designer of a compliant duct is that flow affects the acoustic branches
of the spectrum far more than the structural ones, and that the two can therefore be tuned
semi-independently.

At second order the picture changes qualitatively rather than merely quantitatively, because the shear
contribution is no longer a correction to the convective one. Table~\ref{tab:xi2} shows the shear part
reaching $38\%$ of the total on the plane wave and $57\%$ on the antisymmetric structural branch and
changing sign between branches, while a genuine plug flow (a common simplification in duct-acoustic
practice) errs by a factor of two to four when matched on centreline velocity. Any uniform-flow model
is therefore wrong at $O(M^2)$, by an amount that differs in sign from branch to branch.

\subsection{What the critical layer does, and does not, do}
The critical-layer threshold~\eqref{eq:Mcrit} divides the parameter space into a regular subsonic
regime, in which the low-Mach asymptotics apply, and a singular regime beyond it. The branch-resolved
form of the threshold is the structural--acoustic content of the result: because $M_{\rm crit}=\Om/\xi$
and the structural wave has $\xi=\Om^{1/2}$, the threshold for that wave is $\sqrt\Om$, which is below
unity precisely when the frequency is below coincidence. Coincidence is a property of the plates, so
the accessibility of a critical layer in this waveguide is controlled by the structure, not by the
fluid, a feature with no counterpart in a rigid or lined duct.

Within the singular regime the layer has the classical structure: bounded pressure with a
$\zeta^3\ln\zeta$ term, a $\zeta^{-1}$ velocity divergence, a causal phase jump of $-\ii\pi$, and a
viscous inner layer of thickness $\Rey^{-1/3}$ that thickens to $\Rey^{-1/4}$ at the degenerate onset
point. What is specific to this problem, however, is not weak coupling but \emph{parity}. Absorption is
proportional to $|p_0(\eta_c)|^2$, and the antisymmetric structural mode (the one branch for which
the critical layer is subsonically accessible) carries an exact pressure node at the very location
where the layer first appears; above onset the two symmetric layers contribute equal and opposite phase
jumps. The null is therefore a symmetry statement, confirmed by a parity-restricted computation that
also shows the symmetric mode, with its centreline antinode, to be unprotected
(Section~\ref{sec:parity}).

The contrast with the impedance-wall problem of Brambley, Darau and Rienstra~\cite{brambleydarau} is
instructive. There the critical layer sits inside a thin boundary layer adjacent to the lining, where
the surface modes have their largest amplitude, and the absorption is significant. Here the shear
occupies the whole channel, the layer is at the centreline, and, crucially, the channel possesses
a mirror symmetry that an impedance-lined duct does not; both differences point the same way for the
antisymmetric mode. The broader lesson is that the importance of a critical layer to a discrete mode is
governed less by the strength of the singularity than by the symmetry of the mode relative to the layer.

\subsection{Where the singular response goes}
If the discrete modes are not absorbed, the energy exchange implied by the $-\ii\pi$ phase jump must be
carried elsewhere. It is carried by the cross-stream velocity field, which diverges as
$(\eta-\eta_c)^{-1}$, and in a Green's-function treatment by the continuous spectrum: the non-modal
contribution obtained by integrating around the branch cut~\cite{case,brambleydarau,king2022}. King
\emph{et al.}~\cite{king2022} found this contribution to dominate the downstream field for a quadratic
boundary-layer profile, which is the profile relevant here, so it is entirely possible that the
critical-layer response of the present waveguide is large in a forced problem while being negligible
in the modal problem treated above. The two statements are compatible, and distinguishing them
carefully is the natural next step.

\subsection{Limitations}
Four limitations should be stated. First, the mean flow is incompressible and strictly parabolic, so
the finite-$M$ results are an extension of the model rather than a compressible prediction. Second,
only the discrete spectrum is computed; the continuous spectrum, which carries the genuinely singular
critical-layer response, is not. Third, the plates are modelled by Kirchhoff theory with no in-plane
tension, no rotary inertia and no shear deformation, which restricts the analysis to wavelengths long
compared with the plate thickness. Fourth, the critical layer is treated linearly; at finite amplitude
the $(\eta-\eta_c)^{-1}$ velocity singularity is regularised by nonlinear wave--wave interaction rather
than by viscosity, with vortex roll-up and wave breaking, and the nonlinear layer is outside the scope
of the present linear analysis.

\section{Conclusions}\label{sec:concl}

This paper has analysed the coupled dispersion of a two-dimensional structural--acoustic waveguide
formed by a compressible fluid column between two thin elastic plates, carrying a fully developed
Poiseuille mean flow. The principal results are as follows.

\begin{enumerate}[label=\arabic*.,leftmargin=2em,itemsep=3pt]
\item The governing equations have been set out in full. With a sheared mean flow the pressure
perturbation obeys the Pridmore--Brown equation~\eqref{eq:pb}, derived here from the linearised Euler
system for this geometry. Because a Poiseuille profile satisfies no-slip, the convective
(Ingard--Myers) term in the compliant-wall boundary condition vanishes \emph{identically}, so that the
plate-admittance conditions retain their quiescent Robin form~\eqref{eq:robinp} at every Mach number.
The mean flow enters only through the bulk operator. This exact property of the configuration removes
the well-posedness difficulties associated with the Ingard--Myers condition and is what makes the
closed-form analysis possible.

\item The leading operator of the Mach expansion is self-adjoint under the plate-admittance conditions
(Theorem~\ref{thm:sa}), so the Fredholm alternative applies with $p_0$ as its own adjoint null
function. The resulting solvability condition yields the fully explicit first-order
correction~\eqref{eq:xi1gen}, valid on every branch, for arbitrary fluid loading, for non-identical
plates and for an arbitrary smooth no-slip shear profile, and requiring only three elementary
quadratures of the quiescent mode shape (two for the parabolic profile). The correction is
negative on every branch (Proposition~\ref{prop:sign}), so a downstream flow lowers the axial
wavenumber and the dispersion relation loses its $k_x\to-k_x$ symmetry. It decomposes cleanly into
boundary-shear, bulk-shear and Doppler-convection contributions, of which the last dominates and
explains why the plane wave is roughly six times more flow-sensitive than the structural branches.

\item The second-order coefficient $\xi_2$ is even in $M$ and has been split into convection and shear
parts. The shear part reaches $38\%$ of the total on the plane-wave branch and $57\%$ on the
antisymmetric structural branch and changes sign between branches; a genuine plug flow errs by a factor
of two to four, so a uniform-flow model misrepresents the second-order behaviour and the full profile
must be retained.

\item A critical layer appears once $M\ge M_{\rm crit}=\Om/\xi$. Because $\xi=\Om^{1/2}$ on the
structural wave, $M_{\rm crit}=\sqrt\Om$ there, which is subsonic precisely below coincidence; every
other branch requires sonic or supersonic flow. The accessibility of a critical layer in this waveguide
is therefore controlled by a property of the structure. The layer has bounded pressure with a
$\zeta^3\ln\zeta$ term, a $\zeta^{-1}$ cross-stream velocity divergence, a causal phase jump of
$-\ii\pi$, and a viscous inner layer of thickness $\Rey^{-1/3}$ governed by an Airy balance. The onset
point is degenerate ($\sig$ has a double zero, the exponents become $\{0,5\}$ and the inner layer
thickens to $\Rey^{-1/4}$), and above onset there are two critical layers, symmetric about the
centreline, carrying phase jumps of opposite sign.

\item Solving the viscous linearised Navier--Stokes problem shows that the discrete wall modes pass
through and above $M_{\rm crit}$ with a regular eigenvalue whose attenuation scales as $\Rey^{-1/2}$,
identifying it as wall Stokes-layer damping that vanishes inviscidly. For the antisymmetric structural
mode the $\Rey$-independent critical-layer absorption is at the resolution floor ($|B|\approx10^{-5}$),
and this null is shown to be a consequence of \emph{parity}: the mode's exact mid-plane pressure node
coincides with the onset location, and the two supercritical layers carry opposite phase jumps. A
parity-restricted computation (necessary because the two structural branches become nearly
degenerate above onset and a naive tracker conflates them) shows that the symmetric mode, which has a
centreline antinode, instead acquires a resolvable residual about sixty times larger, and that breaking
the mirror symmetry with unequal plates lifts the antisymmetric residual off the floor. The absorption
null is therefore specific to the antisymmetric family and symmetry-protected, not a general geometric
smallness, which accounts for the contrast with the impedance-wall surface modes
of~\cite{brambleydarau}. The antisymmetric wall modes are convectively stable up to the transonic limit;
the symmetric structural mode, by contrast, is convectively unstable above $M_{\rm crit}$. This is a
critical-layer instability of a compliant-wall structural--acoustic mode: its growth survives the
inviscid limit, its neutral curve $M_n(\Om)$ follows the critical-layer onset $M_{\rm crit}(\Om)$, and
its growth rate is proportional to the wall compliance, so it vanishes for rigid walls and is distinct
from the Tollmien--Schlichting and Kelvin--Helmholtz instabilities.

\item The quiescent baseline (the coupled dispersion relation~\eqref{eq:disp}, its exact
symmetric/antisymmetric factorisation, and the small- and large-$\eps$ catalogue) recovers for the
two-plate geometry results consistent with the established asymptotic-\linebreak[1]tracking framework, and is
presented here in full as setup, notation and $M=0$ baseline rather than as a claim to novelty. Its
incremental content is the antisymmetric family, invisible in the one-flexible-plate geometry, and the
non-identical-plate veering~\eqref{eq:veer}.

\item Three independent numerical schemes agree to $\sim6\times10^{-9}$, and the closed-form $\xi_1$
matches numerical differentiation to $0.1$--$0.5\%$.
\end{enumerate}

Two questions are left open. The first is the continuous spectrum, which carries the genuinely singular
part of the critical-layer response and which, on the evidence of~\cite{king2022} for a quadratic
profile, may dominate the forced problem even though it is irrelevant to the modal one. The second is the
nonlinear critical layer, in which the velocity singularity is regularised by wave--wave interaction
rather than by viscosity. Both lie outside the linear, modal framework used here, and the nonlinear
saturation of the convective instability of Section~\ref{sec:instab} is the natural next step.

\section*{Declaration of competing interest}
\addcontentsline{toc}{section}{Declaration of competing interest}
The author declares no known competing financial interests or personal relationships that could have
appeared to influence the work reported in this paper.

\section*{Data availability}
\addcontentsline{toc}{section}{Data availability}
The scripts implementing the closed-form catalogue, the three numerical solvers (including the
parity-restricted viscous eigensolver of Section~\ref{sec:viscous}) and the figure generation are openly
available at \url{https://github.com/i4mGr0ot/Acoustics} and archived at
\href{https://doi.org/10.5281/zenodo.22288008}{doi:10.5281/zenodo.22288008}.

\appendix

\section{Derivation of the quiescent asymptotic catalogue}\label{app:catalogue}

This appendix supplies the algebra behind Section~\ref{sec:catalogue}. Throughout,
$S(\xi)=\xi^4/\Om^2-1$ and $\tau=l\sqrt{\Om^2-\xi^2}$, and the two identical-plate relations are
\eqref{eq:nd}.

\subsection*{A.1 Structural branch, small $\eps$}
Set $\xi=\Om^{1/2}+a_1\eps+O(\eps^2)$. Since $S(\Om^{1/2})=0$ and
$S'(\xi)=4\xi^3/\Om^2$, one has $S=4\Om^{-1/2}a_1\eps+O(\eps^2)$. The transverse wavenumber evaluated
on the parent is $\tau_s=l\sqrt{\Om^2-\Om}$, and $\tau=\tau_s+O(\eps)$. Substituting into the
symmetric relation $S\tau\tan(\tau/2)+\eps=0$ and balancing at $O(\eps)$,
\begin{equation}
4\Om^{-1/2}a_1\,\tau_s\tan\frac{\tau_s}{2}+1=0
\quad\Longrightarrow\quad
a_1=-\frac{\Om^{1/2}}{4\tau_s\tan(\tau_s/2)}
=-\frac{1}{4l\sqrt{\Om-1}\,\tan(\tau_s/2)},
\end{equation}
using $\tau_s=l\Om^{1/2}\sqrt{\Om-1}$. This is~\eqref{eq:struct}$_1$. The antisymmetric relation
$S\tau\cot(\tau/2)-\eps=0$ gives, by the same balance,
$a_1=+\tan(\tau_s/2)/[4l\sqrt{\Om-1}]$, which is~\eqref{eq:struct}$_2$. As $\Om\to1$, $\tau_s\to0$ and
$\tan(\tau_s/2)\to\tau_s/2$, so the symmetric $a_1$ diverges (signalling the breakdown treated in
A.4) while the antisymmetric $a_1\to\tau_s/(8l\sqrt{\Om-1})=\tfrac18$, which is finite.

\subsection*{A.2 Plane wave, small $\eps$}
Set $\xi=\Om+b_1\eps$. Then $\tau^2=l^2(\Om^2-\xi^2)=-2l^2\Om b_1\eps+O(\eps^2)$, so $\tau=O(\sqrt\eps)$
and $\tau\tan(\tau/2)\to\tau^2/2$. With $S(\Om)=\Om^2-1$ the symmetric relation gives
\begin{equation}
(\Om^2-1)\cdot\bigl(-l^2\Om b_1\eps\bigr)+\eps=0
\quad\Longrightarrow\quad
b_1=\frac{1}{\Om l^2(\Om^2-1)},
\end{equation}
which is~\eqref{eq:planewave}. Note that the antisymmetric relation has no such root, because
$\tau\cot(\tau/2)\to2$ as $\tau\to0$ and the $O(\eps^0)$ balance cannot be satisfied.

\subsection*{A.3 Cut-on branches, small $\eps$}
Set $\tau=m\pi+t$ with $t=O(\eps)$, so that $\tan(\tau/2)\approx t/2$ for $m$ even and
$\cot(\tau/2)\approx-t/2$ for $m$ odd. Since $\tau^2=l^2(\Om^2-\xi^2)$,
$t=-l^2\xi_0(\xi-\xi_0)/(m\pi)$ with $\xi_0=\sqrt{\Om^2-(m\pi/l)^2}$. Substituting into the
appropriate relation and balancing at $O(\eps)$ gives, in both cases,
\begin{equation}
\xi-\xi_0=\frac{2\eps}{S_0l^2\xi_0},\qquad S_0=\frac{\xi_0^4}{\Om^2}-1,
\end{equation}
which is~\eqref{eq:cuton}.

\subsection*{A.4 Local $\sqrt\eps$ expansions}
Near a crossing the two contributions to the dispersion relation vanish simultaneously and the regular
balance fails. At symmetric coincidence, $\Om=1$, both $S(\xi)$ and $\tau$ are small; writing
$\xi=\Om+\sqrt\eps X$ gives $\tau^2\approx-2l^2\Om\sqrt\eps X$ and
$S\approx4(\xi-\Om^{1/2})\approx4\sqrt\eps X$ at $\Om=1$, so the relation
$S\tau\tan(\tau/2)+\eps=0$ becomes at $O(\eps)$
\begin{equation}
4X\cdot\bigl(-l^2 X\bigr)+1=0
\quad\Longrightarrow\quad
X=\pm\frac{1}{2l},
\end{equation}
which is~\eqref{eq:gapcoinc}. At a cut-on crossing the structural wave $\xi=\sqrt{\Om_c}$ meets the
$m$-th cut-on $\tau=m\pi$; writing $\xi=\sqrt{\Om_c}+\sqrt\eps X$ and expanding both $S$ and
$\tan(\tau/2)$ (or $\cot$) linearly gives the quadratic $2l^2X^2=1$, hence
$X=\pm\sqrt{2\eps}/(2l)$ after restoring the scaling, which is~\eqref{eq:gapcuton}. In both cases the
two roots $\pm X$ are the two sides of the gap, and continuity in $\Om$ shows that the branches
exchange identity across it.

\subsection*{A.5 Heavy loading}
Setting $\eps'=1/\eps$ and dividing~\eqref{eq:nd} through by $\eps$, the symmetric relation becomes
$\eps'S\tau\tan(\tau/2)+1=0$, whose $\eps'=0$ limit requires $\tan(\tau/2)\to\infty$, i.e.
$\tau=(2m{+}1)\pi$, the pressure-release ladder. Expanding $\tau=(2m{+}1)\pi+t$ with $t=O(\eps')$ and
balancing gives~\eqref{eq:heavy}. The antisymmetric relation gives $\cot(\tau/2)\to\infty$, i.e.
$\tau=2m\pi$, including $m=0$. This is the parity exchange recorded in Table~\ref{tab:migration}.

\section{Quadratures for standard mode shapes}\label{app:integrals}

The closed form~\eqref{eq:xi1} requires only $p_0(0)$, $p_0(1)$, $I_0=\int_0^1p_0^2$ and
$I_f=\int_0^1fp_0^2$ with $f=4\eta(1-\eta)$; the $O(M^2)$ estimate additionally uses
$I_{f^2}=\int_0^1f^2p_0^2$. All are elementary.

\paragraph{Plane wave, $p_0\equiv1$.}
\begin{equation}
p_0(0)=p_0(1)=1,\qquad I_0=1,\qquad
I_f=\int_0^14\eta(1-\eta)\,\dd\eta=4\Bigl(\frac12-\frac13\Bigr)=\frac23,\qquad
I_{f^2}=\frac{8}{15}.
\end{equation}

\paragraph{Cosine mode, $p_0=\cos(\tau_0\eta)$.}
\begin{align}
I_0&=\int_0^1\cos^2(\tau_0\eta)\,\dd\eta=\frac12+\frac{\sin2\tau_0}{4\tau_0},\\
I_f&=\int_0^14\eta(1-\eta)\cos^2(\tau_0\eta)\,\dd\eta
=\frac23-\frac{2}{\tau_0^2}+\frac{\sin2\tau_0}{2\tau_0^3}+\frac{\cos2\tau_0-1}{2\tau_0^2}.
\end{align}

\paragraph{General mode, $p_0=A\cos(\tau_0\eta)+B\sin(\tau_0\eta)$.}
\begin{equation}
I_0=\frac{A^2+B^2}{2}+\frac{(A^2-B^2)\sin2\tau_0+2AB(1-\cos2\tau_0)}{4\tau_0},
\end{equation}
\begin{equation}
p_0(0)^2=A^2,\qquad p_0(1)^2=\bigl(A\cos\tau_0+B\sin\tau_0\bigr)^2 .
\end{equation}
The flow-weighted norm $I_f$ decomposes into $\int_0^1f\cos^2$, $\int_0^1f\sin^2$ and
$\int_0^1f\cos\sin$, each of which reduces to elementary functions by two integrations by parts using
$f''=-8$.

\section{Frobenius analysis at the critical layer}\label{app:frobenius}

With $\zeta=\eta-\eta_c$ and $\sig\simeq\sig_c'\zeta$, the Pridmore--Brown equation~\eqref{eq:pb}
becomes
\begin{equation}
\hat p''+\frac{2}{\zeta}\hat p'+\bigl(-l^2\xi_0^2+O(\zeta)\bigr)\hat p=0 .
\label{eq:appfrob}
\end{equation}
Substituting $\hat p=\zeta^\rho\sum_{n\ge0}a_n\zeta^n$ with $a_0\ne0$,
\begin{equation}
\hat p''=\sum_na_n(\rho+n)(\rho+n-1)\zeta^{\rho+n-2},
\qquad
\frac{2}{\zeta}\hat p'=\sum_n2a_n(\rho+n)\zeta^{\rho+n-2},
\end{equation}
and the coefficient of the lowest power $\zeta^{\rho-2}$ gives the indicial equation
\begin{equation}
a_0\bigl[\rho(\rho-1)+2\rho\bigr]=a_0\,\rho(\rho+1)=0,
\qquad \rho_1=0,\quad\rho_2=-1 .
\end{equation}
The exponents differ by the integer $1$, so the second solution may contain a logarithm. Two features
of the recurrence determine the outcome. First, for $\rho_1=0$ the constant term $-l^2\xi_0^2$ feeds
into the recurrence only from $n=2$, so the regular solution is $1+O(\zeta^2)$. Second, for
$\rho_2=-1$ the coefficient $a_1$ is forced to diverge, which is the standard indication that the
second solution is non-analytic and that a logarithmic term is required. Constructing it in the form
$\hat p^{(2)}=\hat p^{(1)}\ln\zeta+\sum_nb_n\zeta^{n-1}$ and matching order by order shows that the
logarithm first survives at order $\zeta^3$ in the pressure, so that
\begin{equation}
\hat p=C_1\bigl[1+O(\zeta^2)\bigr]+C_2\bigl[\zeta^3\ln\zeta+O(\zeta^4)\bigr],
\end{equation}
which is~\eqref{eq:frob}: the effective exponent pair for the pressure is $\{0,3\}$ and the pressure is
bounded. The singular behaviour of the physical fields therefore enters not through $\hat p$ but
through the explicit factor $1/\sig\sim1/\zeta$ appearing in~\eqref{eq:vfromp}
and~\eqref{eq:ufromp}, which is why $v'\sim\zeta^{-1}$ while $\hat p=O(1)$.

\section{Equivalence of the inviscid and viscous plate conditions}\label{app:bc}

The inviscid formulation applies the Robin condition~\eqref{eq:robinp} to the pressure; the viscous
formulation applies~\eqref{eq:robinv} to the normal velocity. This appendix shows they agree in the
inviscid limit.

Starting from the common admittance relation~\eqref{eq:adm}, non-dimensionalised at $\eta=0$,
\begin{equation}
\hat v(0)=-\frac{\ii\eps}{l\Om S_1}\hat p(0).
\label{eq:appadm}
\end{equation}
In the viscous formulation, no-slip $\hat u=0$ together with the continuity
relation~\eqref{eq:lnsp} at $\sig=\Om$ gives $\hat p(0)=\ii\hat v_\eta(0)/(l\Om)$. Substituting
into~\eqref{eq:appadm},
\begin{equation}
\hat v(0)=-\frac{\ii\eps}{l\Om S_1}\cdot\frac{\ii\hat v_\eta(0)}{l\Om}
=\frac{\eps}{l^2\Om^2S_1}\hat v_\eta(0)
\quad\Longrightarrow\quad
\hat v(0)-\beta_1\hat v_\eta(0)=0,\qquad \beta_1=\frac{\eps}{l^2\Om^2S_1},
\end{equation}
which is~\eqref{eq:robinv}$_1$; the condition at $\eta=1$ follows identically with the sign of the
outward normal reversed, giving $\hat v(1)+\beta_2\hat v_\eta(1)=0$.

Conversely, as $\Rey\to\infty$ the viscous $y$-momentum balance~\eqref{eq:lnsv} returns to its
inviscid form $\sig\hat v=-\hat p_\eta/l$, so that at the wall, where $\sig=\Om$,
$\hat v(0)=-\hat p_\eta(0)/(l\Om)$. Substituting this into~\eqref{eq:appadm} gives
\begin{equation}
-\frac{\hat p_\eta(0)}{l\Om}=-\frac{\ii\eps}{l\Om S_1}\hat p(0)\cdot\frac{1}{\ii}
\quad\Longrightarrow\quad
\hat p_\eta(0)+\frac{\eps}{S_1}\hat p(0)=0,
\end{equation}
recovering~\eqref{eq:robinp}$_1$. The two formulations are therefore the same condition expressed in
different variables, and the difference between the inviscid and viscous eigenvalues reported in
Section~\ref{sec:viscous} is a genuine viscous effect rather than an artefact of the wall model. Both
reductions, and the coefficients $\beta_j$, were verified with a symbolic-algebra package. Together
with no-slip $\hat u(0)=\hat u(1)=0$, the conditions~\eqref{eq:robinv} supply the four boundary
conditions required by the fourth-order viscous system.

\section{Lagrange identity}\label{app:lagrange}

For $u,v\in C^2[0,1]$ and any constant $\tau$,
\begin{equation}
\int_0^1v\bigl(u''+\tau^2u\bigr)\dd\eta=\bigl[vu'-uv'\bigr]_0^1+\int_0^1u\bigl(v''+\tau^2v\bigr)\dd\eta .
\end{equation}
\begin{proof}
$\int_0^1vu''\,\dd\eta=[vu']_0^1-\int_0^1v'u'\,\dd\eta
=[vu']_0^1-[v'u]_0^1+\int_0^1v''u\,\dd\eta=[vu'-v'u]_0^1+\int_0^1uv''\,\dd\eta$.
Adding $\tau^2\int_0^1vu\,\dd\eta$ to both sides gives the result.
\end{proof}

\end{document}